\documentclass[11pt,a4paper]{article}
\usepackage[utf8]{inputenc}
\usepackage[T1]{fontenc}
\usepackage{amsmath,amssymb,amsthm,bm,mathtools}
\usepackage{geometry}
\usepackage{hyperref}
\usepackage{enumitem}
\usepackage{booktabs}
\usepackage{graphicx}
\DeclareGraphicsExtensions{.pdf}
\usepackage{microtype}
\usepackage{parskip}
\hypersetup{colorlinks=true,linkcolor=blue,citecolor=blue,urlcolor=blue,
  pdfkeywords={lattice Boltzmann method, multiple-relaxation-time, CPTP maps, open quantum systems, two-rail encoding}}

\newtheorem{theorem}{Theorem}[section]
\newtheorem{proposition}[theorem]{Proposition}
\newtheorem{corollary}[theorem]{Corollary}
\newtheorem{remark}[theorem]{Remark}
\newtheorem{assumption}[theorem]{Assumption}

\newcommand{\R}{\mathbb{R}}
\newcommand{\Tr}{\operatorname{Tr}}
\newcommand{\eqdef}{\stackrel{\rm def}{=}}
\newcommand{\AD}{\mathcal{A}}
\newcommand{\Ec}{\mathcal{E}}
\newcommand{\Dc}{\mathcal{D}}
\let\Dd\Dc
\newcommand{\Cc}{\mathcal{C}}

\title{Deterministic Realization of Classical Dissipation on Quantum Computers}

\author{%
  \begin{tabular}{@{}c@{\hspace{1.75em}}c@{\hspace{1.75em}}c@{}}
    Muhammad Idrees Khan$^{a,b}$\protect\thanks{Corresponding author.} & Sauro Succi$^{c,d}$ & Hua-Dong Yao$^{a}$
  \end{tabular}\\[0.75em]
  \makebox[\textwidth][c]{%
    \begin{minipage}{\textwidth}
      \centering
      $^{a}$\,Department of Mechanical Engineering, Chalmers University of Technology, Gothenburg, 41296, Sweden\\[0.45em]
      $^{b}$\,Department of Enterprise Engineering ``Mario Lucertini'', University of Rome ``Tor Vergata'', Via del Politecnico 1, 00133 Rome, Italy\\[0.45em]
      $^{c}$\,Italian Institute of Technology, Viale Regina Elena, 291, 00161 Roma, Italy\\[0.45em]
      $^{d}$\,Department of Physics, Harvard University, 33 Oxford Street, Cambridge, 02138, Massachusetts, USA
    \end{minipage}%
  }%
}
\date{}

\begin{document}
\maketitle

\begin{abstract}
Lattice Boltzmann (LB) on quantum devices must reconcile unitary gate evolution with the dissipative \emph{collision} step. In the multiple-relaxation-time (MRT) class, we work in the common setting of \emph{modewise diagonal} moment relaxation, $\delta m_r'=\lambda_r\,\delta m_r$ with $\lambda_r\in[-1,1]$ (overrelaxation if $\lambda_r<0$). Embedding that contraction in a unitary by block encoding or a linear combination of unitaries (LCU) typically yields subunitary success probability that decays multiplicatively across modes, sites, and time, a key bottleneck for quantum LB. \emph{For the dissipative MRT block alone} we give a \emph{block-encoding-free} construction: a signed \emph{two-rail} population encoding, then a completely positive trace-preserving (CPTP) map (per-rail amplitude damping with survival $|\lambda_r|$ and, if $\lambda_r<0$, a rail SWAP) so that, after the decode, the map agrees with classical MRT relaxation exactly (expectations of the rail number operators, common encoding--decode scale). Trace preservation gives success probability $1$ for that substage. The main result is the dissipative MRT block; construction of the equilibrium moment vector~$m^{\mathrm{eq}}=Mf^{\mathrm{eq}}$ (prescribed~$f^{\mathrm{eq}}$, host moment matrix~$M$; notation as in Section~\ref{subsec:generic-mrt}), moment transforms, streaming, and boundaries are composed with it as in a standard host pipeline and lie outside the scope of the formal theorem. Hybrid and fully coherent encodings, adaptive scales, Carleman-based context, and a one-rail no-go in the same nonnegative population framework are in the main text. Audits of the open-channel map on a long LBM collide–stream simulation and on stencil-free inputs both match the target to machine precision.
\end{abstract}

\noindent\textbf{Keywords:} lattice Boltzmann method; multiple-relaxation-time; CPTP maps; open quantum systems; two-rail encoding.

\section{Introduction}

\noindent Recently, major attention has been directed to the development of quantum algorithms for classical physics problems, most notably nonlinear transport and fluid phenomena~\cite{succi2023epl,wang2025npj,tiwari2025algorithmic}.
Such algorithms face two major extra-challenges: nonlinearity and non-unitarity (dissipation).
Several strategies have been developed in the last few years, a particularly prominent being based on Carleman linearization of the lattice Boltzmann (LB) formulation~\cite{higuera1989cylinder,higuera1989enhanced,higuera1989epl} of fluid dynamics, coupled with linear combination of unitaries (LCU) and/or block encoding (BE)~\cite{sanavio2024lbq,liu2026qlbmlinear}.
Despite numerous attempts, these algorithms are still plagued by a severe bottleneck due to the low success probability of the dissipative update.
In this paper we show that such bottleneck can be removed by turning to a deterministic completely positive trace-preserving (CPTP) channel: amplitude damping, which completely dispenses with LCU and BE steps, thereby delivering unit success probability.
The strategy is custom-made on a pseudo-spectral version of the LB method, whereby free-streaming occurs in real space, whereas the multiple-relaxation-time (MRT) collision, responsible for loss of unitarity, is performed in the space of kinetic moments.
This way, both steps rely upon diagonal representations.
This makes it possible to import powerful techniques from quantum information science and the theory of open quantum dissipative systems and to adapt them to the case of nonlinear dissipative classical systems.
Choi \emph{et al.}~\cite{choi2026lindbladian} combine homotopy analysis with Lindbladian dynamics for nonlinear partial differential equations (PDEs) in an open-system framework. This work gives a different contribution: Theorem~\ref{thm:main} provides a deterministic, block-encoding-free CPTP realization of the modewise MRT update $\delta m_r'=\lambda_r\,\delta m_r$ on signed two-rail encodings in kinetic-moment space.

In the MRT formulation of the lattice Boltzmann method (LBM)~\cite{benzi1992lbe,dhumieres2002multiple,lallemand2003theory}, the dissipative part of the collision is modewise diagonal in a chosen moment basis: nonequilibrium moments relax as $\delta m_r'=\lambda_r\,\delta m_r$ with $\lambda_r=1-s_r\in[-1,1]$. For every dissipative mode this is a \emph{contraction} ($|\lambda_r|\le 1$), and in the overrelaxation regime $\lambda_r<0$ it is also sign-reversing. Contractive, non-unitary maps are not directly implementable on a quantum computer: the standard route is to embed the contraction inside a larger unitary (block encoding) and post-select on an ancilla flag~\cite{nielsen2010qcqi,watrous2018quantum}. The per-step success probability of such a block-encoded contraction is bounded above by $|\lambda_r|^2/\alpha_r^2$ for subnormalization $\alpha_r\ge 1$ (equivalently by $|\lambda_r|^2$ in the tight case $\alpha_r=1$), and compounds \emph{multiplicatively} over dissipative modes, lattice sites, and timesteps. Multiplicative post-selection or heralding penalties of the same structural form also appear in closely related quantum lattice Boltzmann method (QLBM) routes, notably in LCU-based realizations of linear collision operators~\cite{liu2026qlbmlinear} and in block-encoded Carleman-linearized lattice Boltzmann constructions~\cite{sanavio2024lbq}; this compounded probability is widely identified in the QLBM literature as the principal obstacle to end-to-end execution at useful Reynolds numbers.

Theorem~\ref{thm:main} makes the CPTP construction precise: after a signed two-rail population encoding of $\delta m_r$ (Section~\ref{sec:encoding}), amplitude damping with survival $\alpha_r=|\lambda_r|$ on each rail and a rail SWAP whenever $\lambda_r<0$ reproduce $\delta m_r\mapsto \lambda_r\,\delta m_r$ exactly at the decoded expectation-value level, with per-step success probability $1$ for the dissipative block itself (Corollary~\ref{cor:deterministic}).

The remainder of the paper is organized as follows. Section~\ref{subsec:generic-mrt} fixes notation for diagonal MRT in moment space. Section~\ref{sec:encoding} contains the encoding, the signed channel, and the main theorem. Section~\ref{sec:blockenc-vs-cptp} quantifies the block-encoding / success-probability comparison and situates the present result in the QLBM literature. Section~\ref{sec:integration} sketches how the primitive integrates into a generic block-encoded MRT collision pipeline, comments on its relation to Carleman-linearized QLBM circuits, which share the same structural bottleneck but do not expose a mode-wise diagonal target, and spells out a minimal hybrid Carleman--CPTP pipeline (\S\ref{subsec:carleman-cptp}) in which Carleman linearization is used only for the nonlinear equilibrium evaluation and the dissipative relaxation is handled by the present primitive. Section~\ref{sec:validation} reports a machine-precision numerical audit on D3Q19.

The main contributions are the following.
\begin{enumerate}[leftmargin=2em,label=(C\arabic*)]
\item The dissipative MRT update $\delta m_r'=\lambda_r\delta m_r$ is realized as a \emph{deterministic CPTP channel with unit per-step success probability}, in contrast to block-encoded contractions whose success probability decays multiplicatively across modes, sites, and timesteps.
\item Theorem~\ref{thm:main} proves exact decoded-expectation equivalence on two-rail registers, including the overrelaxation branch $\lambda_r<0$, which is captured by a classically conditioned rail SWAP composed with amplitude damping at survival $|\lambda_r|$.
\item Proposition~\ref{prop:adaptive} shows that the encoding scale $S_r(\bm x,t)$ is classically indexed side information rather than a hidden quantum resource: adaptive scales produce a classically indexed family of encode--channel--decode composites.
\item Within the single-rail nonnegative population-encoding framework of this paper, one rail cannot exactly represent a signed nonequilibrium moment on a two-signed domain (Appendix~\ref{app:one-rail}), which motivates the two-rail split.
\item Numerical audits: Section~\ref{subsec:d3q19-audit} (D3Q19, decaying Taylor--Green vortex) confirms machine-precision agreement with the target dissipative update when the error is maximized over all dissipative modes and all lattice sites; Section~\ref{subsec:synthetic-audit} independently stress-tests the same per-mode map on stencil-free random $(\delta m_r,\lambda_r)\in[-X,X]\times[-1,1]$, with no particular stencil or flow.
\end{enumerate}

Any QLBM pipeline whose dissipative collision is implemented by block-encoding a diagonal contraction on nonequilibrium moments (that is, any block-encoded MRT scheme in which the dissipative block reduces to $\Lambda=\mathrm{diag}(\lambda_r)$ on dissipative modes) pays a per-mode, per-site, per-step success-probability cost that decays multiplicatively in $T$, $|\Omega_h|$, and $|\Dd|$ (Section~\ref{subsec:be-bottleneck}). Substituting the primitive proved here for that contraction replaces this compounded cost by $1$ for the dissipative block itself, independently of $\lambda_r$, $S_r$, and the input state (Corollary~\ref{cor:deterministic}); the overrelaxation branch $\lambda_r<0$ is absorbed as a classically conditioned rail SWAP at no probability cost, not as an additional block-encoded conditional. The price paid in exchange, namely a classical nonlinear sign split, two rails per dissipative mode, and classically tracked scales $S_r(\bm x,t)$, is bookkeeping rather than probability loss (Section~\ref{subsec:tradeoffs}). Preparation, extraction, reconstruction, streaming, boundaries, and readout remain separate, host-determined pipeline stages (Section~\ref{sec:integration}) and are not claimed to benefit from Theorem~\ref{thm:main} in isolation. Block-encoded Carleman-linearized lattice Boltzmann circuits~\cite{sanavio2024lbq} share the same structural bottleneck of a non-unitary collision operator embedded with ancilla post-selection, but their non-unitarity is carried by the full (non-sparse, non-block-diagonal) Carleman collision matrix rather than by a mode-wise diagonal $\Lambda$, so the present primitive does not apply to them as a drop-in replacement; the relation is discussed in Section~\ref{subsec:dropin}.

The focus of this article is the dissipative block of MRT collision and its open-channel realization on rails. Coherent equilibrium loading, full one-node collision reconstruction, streaming, lattice boundaries, hardware execution, finite-shot behavior, and asymptotic speedup claims are outside the scope of this work. The dissipative primitive proved here is of standalone theoretical interest as a block-encoding-free CPTP realization of a signed modewise contraction: its validity does not depend on any specific host pipeline, and Section~\ref{sec:integration} shows that it is compatible with generic block-encoded MRT collision pipelines. Its relation to Carleman-linearized QLBM circuits, which illustrate the same block-encoding bottleneck but do not expose a mode-wise diagonal target, is also discussed there.

\noindent Throughout the paper, items are labeled (Letter)(number) so they can be cited in one pair of parentheses. The letter only names which list is meant (it is not a variable): C marks the contribution list; I marks the practical-interpretation checklist; T marks the three tradeoffs of the deterministic CPTP route; P marks the four on-site MRT collision pipeline stages; U marks the three host specializations in the list immediately after~\eqref{eq:high-level-hybrid} (block-encoding, LCU, or Carleman); H marks the six substeps in the hybrid Carleman--CPTP write-up~\eqref{eq:hybrid-stages}; and S marks the five stencil-free synthetic audit cases. The number indexes an entry within that list. Thus (C1)--(C5) are the items above; (I1)--(I7) appear in Section~\ref{subsec:practical}; (T1)--(T3) in Section~\ref{subsec:tradeoffs}; (P1)--(P4) in Section~\ref{subsec:generic-pipeline}; (U1)--(U3) in that list (right after~\eqref{eq:high-level-hybrid}); (H1)--(H6) in~\eqref{eq:hybrid-stages}; and (S1)--(S5) in Section~\ref{subsec:synthetic-audit}.

\section{MRT-LBM in moment space}

\subsection{Generic moment-space collision}\label{subsec:generic-mrt}
Consider a lattice Boltzmann scheme with $q$ discrete velocities $\{\bm c_i\}_{i=0}^{q-1}$ and population vector
\begin{equation}
 f=(f_0,\ldots,f_{q-1})^\top\in\R^{q}.
\end{equation}
Let $M\in\R^{q\times q}$ be any fixed \emph{invertible} moment matrix and define $m=Mf$. Given a prescribed equilibrium $f^{\mathrm{eq}}(\rho,\bm u)$, where $(\rho,\bm u)$ are the hydrodynamic fields extracted by the baseline MRT model (the usual mass and momentum densities constructed from $f$), set $m^{\mathrm{eq}}=Mf^{\mathrm{eq}}$. The MRT collision step is
\begin{equation}
 \delta m = m - m^{\mathrm{eq}},
\end{equation}
\begin{equation}
 \delta m_r' = (1-s_r)\,\delta m_r,
 \qquad
 m^+ = m^{\mathrm{eq}} + \delta m',
 \qquad
 f^\star = M^{-1}m^+.
 \label{eq:mrt-collision}
\end{equation}
Post-collision populations stream by
\begin{equation}
 f_i(\bm x+\bm c_i,t+1)=f_i^\star(\bm x,t).
 \label{eq:streaming}
\end{equation}
Let $\Cc$ denote the conserved moment index set and $\Dd$ its complement (dissipative modes), with $s_r=0$ for $r\in\Cc$. Define the modewise relaxation multiplier
\begin{equation}
 \lambda_r \eqdef 1-s_r.
 \label{eq:lambda-def}
\end{equation}
Under the usual MRT admissible range $0\le s_r\le 2$ for $r\in\Dd$, one has
\begin{equation}
 \lambda_r\in[-1,1]
 \qquad \text{for every } r\in\Dd.
 \label{eq:lambda-range}
\end{equation}
The regime $\lambda_r<0$ is \emph{overrelaxation}. Below we only require $M$ to be invertible; orthogonality is not assumed. The collision operator in \eqref{eq:mrt-collision} is \emph{modewise diagonal} in the moment coordinates defined by $M$: each nonequilibrium moment $\delta m_r$ relaxes with its own scalar factor. Block-coupled or collision models that are nondiagonal in this basis lie outside the scope of the theorem below. The open-channel construction in Section~\ref{sec:encoding} applies independently to each $r\in\Dd$ whenever \eqref{eq:mrt-collision} holds in that basis; particular stencils (D2Q9, D3Q19, D3Q27, \ldots) specify $q$, the $\bm c_i$, weights, and the numerical values of $s_r$.

\section{Signed two-rail population encoding and open-channel realization}\label{sec:encoding}
Fix the generic MRT setting of Section~\ref{subsec:generic-mrt}. The central result is Theorem~\ref{thm:main}, which addresses dissipative modes only; Appendix~\ref{app:classical-embed} records optional algebraic consequences when the open dissipative update is embedded in the same moment reconstruction as a classical baseline. D3Q19 (Appendix~\ref{app:d3q19}) appears only in the numerical consistency check of Section~\ref{sec:validation}.

For each dissipative mode $r\in\Dd$, let $S_r=S_r(\bm x,t)>0$ denote the \textbf{encoding scale}, which may be chosen globally, per mode, or adaptively per node and time, provided the normalized rail populations defined below remain in $[0,1]$. Define
\begin{equation}
 p_r^+ = \frac{\max(\delta m_r,0)}{S_r},
 \qquad
 p_r^- = \frac{\max(-\delta m_r,0)}{S_r}.
 \label{eq:rails}
\end{equation}
We require $p_r^\pm\in[0,1]$; the scale $S_r$ carries the physical magnitude and is restored by the decoder. Assign one qubit to each rail and define diagonal one-qubit states
\begin{equation}
 \rho_r^+ = (1-p_r^+)\,|0\rangle\langle 0| + p_r^+\,|1\rangle\langle 1|,
 \qquad
 \rho_r^- = (1-p_r^-)\,|0\rangle\langle 0| + p_r^-\,|1\rangle\langle 1|.
\end{equation}
The signed two-rail population encoding is $\rho_r = \rho_r^+\otimes \rho_r^-$. Introduce rail number operators $n_+ = |1\rangle\langle 1|\otimes I$ and $n_- = I\otimes |1\rangle\langle 1|$. The decoder is
\begin{equation}
 \Dc_r(\rho_r) \eqdef S_r\big(\Tr[n_+\rho_r]-\Tr[n_-\rho_r]\big)
 = S_r(p_r^+-p_r^-)=\delta m_r.
 \label{eq:decoder}
\end{equation}

\begin{remark}[Choosing the encoding scale~$S_r$]\label{rem:choose-Sr}
At each use, feasibility $p_r^\pm\in[0,1]$ in \eqref{eq:rails} is equivalent to $S_r\ge |\delta m_r|$ when $\delta m_r\neq 0$ (if $\delta m_r=0$, any $S_r>0$ suffices). An automatic per-step choice at fixed $(\bm x,t)$ is $S_r=\max(|\delta m_r|,\varepsilon)$ with fixed $\varepsilon>0$ for numerical stability, using the \emph{same} $S_r$ in encoding and decoding \eqref{eq:decoder}. A manual or campaign-wide choice is any fixed $S_r$ (or per-mode bound) with $S_r\ge \sup|\delta m_r|$ over the nodes, times, and runs of interest; larger $S_r$ remains valid but dilutes rail populations. The theorem does not fix $S_r$ from Reynolds number or mesh spacing alone without such state information or a chosen bound.
\end{remark}

\begin{remark}[Nonlinear encoding map]\label{rem:nonlinear-encoding}
For fixed $S_r$, the map $\delta m_r\mapsto(p_r^+,p_r^-)$ defined by \eqref{eq:rails} is \emph{nonlinear} because of the positive/negative split. Accordingly, the result is an exact decoded CPTP realization on encoded rail registers (nonlinear encoding, CPTP channels on the rails, then decoding), \emph{not} a single linear CPTP map acting directly on the unencoded classical moment variable~$\delta m_r$.
\end{remark}

The positive/negative split in \eqref{eq:rails} is not a technical convenience: within the present nonnegative population-encoding framework, a single rail cannot exactly represent a signed scalar on any domain containing both signs. We defer the precise statement and proof to Appendix~\ref{app:one-rail}, since the construction below does not rely on it; it is recorded there to motivate the two-rail split rather than a single-rail encoding.

Let $\AD_\alpha$ denote the one-qubit amplitude-damping channel with survival factor $\alpha\in[0,1]$, which maps population $p\mapsto \alpha p$ \cite{nielsen2010qcqi,watrous2018quantum}. Set $\alpha_r=|\lambda_r|$. Define the signed channel on the two-rail register by
\begin{equation}
 \Ec_r =
 \begin{cases}
 \AD_{\alpha_r}\otimes \AD_{\alpha_r}, & \lambda_r\ge 0,\\[1mm]
 \mathrm{SWAP}\circ (\AD_{\alpha_r}\otimes \AD_{\alpha_r}), & \lambda_r<0.
 \end{cases}
 \label{eq:signed-channel}
\end{equation}
Both branches are CPTP. The full construction is summarized in Figure~\ref{fig:schematic}.

\begin{figure}[htbp]
\centering
\includegraphics[width=\textwidth]{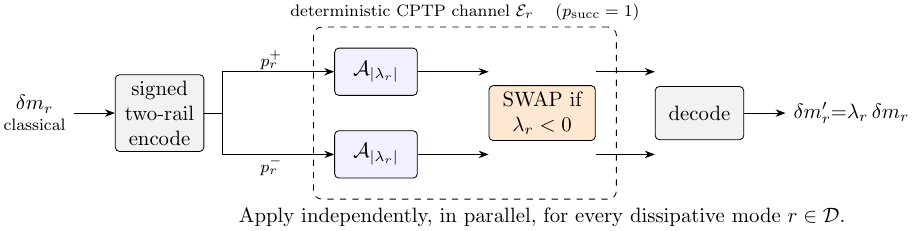}
\caption{Block-encoding-free CPTP realization of the diagonal MRT dissipative update. The classical nonequilibrium moment $\delta m_r$ is mapped onto two rails $(p_r^+,p_r^-)$ by the signed population encoding \eqref{eq:rails}. Each rail is damped by an amplitude-damping channel with survival $|\lambda_r|$, and a rail SWAP is applied when $\lambda_r<0$; the decoder \eqref{eq:decoder} reads off $\delta m_r'=\lambda_r\,\delta m_r$ exactly (Theorem~\ref{thm:main}). The boxed region is trace-preserving by construction, so the per-application success probability is unity: no heralding, post-selection, or amplitude amplification is required, in contrast to block-encoded realizations of the same contraction (Section~\ref{sec:blockenc-vs-cptp}). The SWAP branch is selected by the classical sign of $\lambda_r$ and carries no probability cost.}
\label{fig:schematic}
\end{figure}

\begin{figure}[htbp]
\centering
\includegraphics[width=\textwidth]{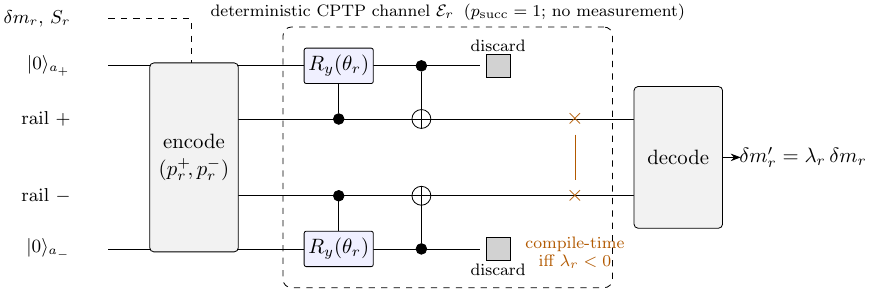}
\caption{Gate-level realization of the CPTP channel $\Ec_r$ of Figure~\ref{fig:schematic}. Each amplitude-damping box is opened into its Stinespring dilation: a fresh ancilla $|0\rangle_{a_\pm}$, a rail-controlled rotation $R_y(\theta_r)$ with $\theta_r=2\arccos\!\sqrt{|\lambda_r|}$ (equivalently $2\arcsin\!\sqrt{1-|\lambda_r|}$), a controlled NOT (CNOT) gate from the ancilla back onto the rail, and a final \emph{discard} (trace-out) of the ancilla. The rail SWAP is a \emph{compile-time} branch selected by $\mathrm{sign}(\lambda_r)$ per \eqref{eq:signed-channel}, either present (if $\lambda_r<0$) or absent, so no classical control wire or runtime conditional is needed (no coherent conditional on $\lambda_r$ is required; Section~\ref{subsec:generic-pipeline}). Crucially, no ancilla is ever measured: the dashed CPTP frame contains only coherent unitaries followed by an ancilla trace-out, both of which are deterministic, so $p_{\mathrm{succ}}=1$ per application (Corollary~\ref{cor:deterministic}). This is the concrete gate-level contrast with measurement-based realizations of the same contraction, where ancilla post-selection introduces a $\|\cdot\|$-dependent success-probability penalty (Section~\ref{sec:blockenc-vs-cptp}).}
\label{fig:circuit-h4}
\end{figure}
\begin{theorem}[Modewise CPTP realization of MRT dissipative relaxation]\label{thm:main}
For each dissipative mode $r\in\Dd$, let $S_r=S_r(\bm x,t)$ be any positive scale such that the encoded populations $p_r^\pm$ in \eqref{eq:rails} lie in $[0,1]$, and use the same $S_r$ consistently in encoding and decoding. The map $\delta m_r\mapsto(p_r^+,p_r^-)$ is generally nonlinear (Remark~\ref{rem:nonlinear-encoding}); the CPTP map $\Ec_r$ acts on the two-qubit rail state after this encoding. Under amplitude damping with $\alpha_r=|\lambda_r|$, followed by SWAP when $\lambda_r<0$, the decoded output satisfies
\begin{equation}
 \delta m_r' = \lambda_r\,\delta m_r
\end{equation}
exactly at the decoded expectation-value level. Therefore the dissipative part of the classical MRT collision operator is reproduced exactly by the encode--channel--decode construction on the rails (for any fixed valid scale choice); when the scale is chosen adaptively, Proposition~\ref{prop:adaptive} describes the resulting indexed family of composites.
\end{theorem}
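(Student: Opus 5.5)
The plan is a direct computation, tracking the two rail populations through $\Ec_r$ and then applying the decoder \eqref{eq:decoder}. First I record the action of $\AD_\alpha$ on a diagonal one-qubit state. Using the Kraus operators $K_0=|0\rangle\langle0|+\sqrt{\alpha}\,|1\rangle\langle1|$ and $K_1=\sqrt{1-\alpha}\,|0\rangle\langle1|$, one has $K_0^\dagger K_0+K_1^\dagger K_1=I$, so the map is CPTP. A diagonal state $(1-p)|0\rangle\langle0|+p|1\rangle\langle1|$ is mapped to the diagonal state $(1-\alpha p)|0\rangle\langle0|+\alpha p|1\rangle\langle1|$. Hence $\Tr[|1\rangle\langle1|\,\AD_\alpha(\rho)]=\alpha\,\Tr[|1\rangle\langle1|\rho]$, and this holds for any input $\rho$, not just diagonal ones. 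Since $\alpha_r=|\lambda_r|\in[0,1]$ by \eqref{eq:lambda-range}, $\alpha_r$ is an admissible survival factor.

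Next I use the product structure. The channel $\AD_{\alpha_r}\otimes\AD_{\alpha_r}$ applied to $\rho_r^+\otimes\rho_r^-$ gives $\AD_{\alpha_r}(\rho_r^+)\otimes\AD_{\alpha_r}(\rho_r^-)$. Because $n_\pm$ act nontrivially on a single factor, their expectations factor and give $\alpha_r p_r^+$ and $\alpha_r p_r^-$. In the branch $\lambda_r\ge0$ the decoder therefore returns
\[
S_r\,\alpha_r(p_r^+-p_r^-)=\lambda_r\,S_r(p_r^+-p_r^-)=\lambda_r\,\delta m_r,
\]
where the last equality is \eqref{eq:decoder}.

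In the branch $\lambda_r<0$ I use the conjugation identity $\mathrm{SWAP}\,n_\pm\,\mathrm{SWAP}=n_\mp$. It gives $\Tr[n_\pm\,\mathrm{SWAP}\,\sigma\,\mathrm{SWAP}]=\Tr[n_\mp\sigma]$, so the output rail populations are $(\alpha_r p_r^-,\alpha_r p_r^+)$. The decoder then returns
\[
S_r\,\alpha_r(p_r^--p_r^+)=-|\lambda_r|\,\delta m_r=\lambda_r\,\delta m_r.
\]

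The remaining points are bookkeeping. Nonlinearity of the encoding (Remark~\ref{rem:nonlinear-encoding}) never enters, because the argument only uses the values $p_r^\pm\in[0,1]$ and the identity $S_r(p_r^+-p_r^-)=\delta m_r$. Using the same $S_r$ in encoding and decoding is exactly what makes the scale cancel, so any valid choice of $S_r$ works, including adaptive ones handled by Proposition~\ref{prop:adaptive}. CPTP of both branches follows because compositions and tensor products of CPTP maps are CPTP, and SWAP is unitary.

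I expect no genuine obstacle. The only step needing care is making clear that the claim is at the level of decoded expectation values: the output state need not equal the encoding of $\lambda_r\delta m_r$. For example, in the negative branch the rail to which the residue lands is swapped, and in general both output rails may be occupied, so the output is not the canonical sign-split encoding. The decoded difference $S_r(\langle n_+\rangle-\langle n_-\rangle)$ is nevertheless exact, and I will state this caveat explicitly.
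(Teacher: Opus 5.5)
Your proposal is correct and follows the paper's proof: amplitude damping sends $p_r^\pm\mapsto\alpha_r p_r^\pm$, and the two sign branches are then pushed through the decoder \eqref{eq:decoder}; your Kraus-operator check and the identity $\mathrm{SWAP}\,n_\pm\,\mathrm{SWAP}=n_\mp$ only spell out what the paper leaves implicit. Your closing caveat is mistaken and should be dropped: the encoder \eqref{eq:rails} populates at most one rail, and damping and SWAP map diagonal product states to diagonal product states, so the output is exactly the canonical two-rail encoding of $\lambda_r\delta m_r$ at the same $S_r$ (the two rails are never occupied simultaneously), which means exactness in fact holds at the level of the state itself.
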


\begin{proof}
Amplitude damping scales $p_r^\pm\mapsto \alpha_r p_r^\pm$. If $\lambda_r\ge 0$: $\alpha_r=\lambda_r$, no swap. Then $\Dc_r(\Ec_r(\rho_r)) = S_r(\alpha_r p_r^+ - \alpha_r p_r^-) = \lambda_r S_r(p_r^+-p_r^-) = \lambda_r\delta m_r$. If $\lambda_r<0$: $\alpha_r=-\lambda_r$, and the swap exchanges the rails after damping, so $\Dc_r(\Ec_r(\rho_r)) = S_r(\alpha_r p_r^- - \alpha_r p_r^+) = -\alpha_r S_r(p_r^+-p_r^-) = \lambda_r\delta m_r$.
\end{proof}

\begin{corollary}[Channel stage is deterministic]\label{cor:deterministic}
For every dissipative mode $r\in\Dd$ and every admissible two-rail input state $\rho_r$, the channel $\Ec_r$ of \eqref{eq:signed-channel} satisfies
\begin{equation}
\Tr\!\big[\Ec_r(\rho_r)\big] \;=\; \Tr[\rho_r] \;=\; 1,
\qquad
p_{\mathrm{succ}}^{(r)} \;=\; 1,
\label{eq:cor-det}
\end{equation}
independently of $\lambda_r$, $S_r$, and the input state. Composed independently across $r\in\Dd$, lattice sites $\bm x\in\Omega_h$, and timesteps $t=1,\ldots,T$, the product dissipative channel
\begin{equation}
\bigotimes_{t=1}^{T}\bigotimes_{\bm x\in\Omega_h}\bigotimes_{r\in\Dd}\Ec_r(\bm x,t)
\label{eq:cor-product}
\end{equation}
is CPTP with per-step and end-to-end success probability equal to $1$. In particular, the dissipative block itself does not accumulate any multiplicative success-probability penalty in $|\Dd|$, $|\Omega_h|$, or $T$, in contrast to block-encoded realizations of the same contraction (Section~\ref{subsec:be-bottleneck}).
\end{corollary}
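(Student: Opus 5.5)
The plan is to reduce Corollary~\ref{cor:deterministic} to two standard facts: CPTP maps are closed under composition and under tensor products, and a trace-preserving map applied to a normalized state yields a normalized state. The only channel-specific input is that each building block of $\Ec_r$ in \eqref{eq:signed-channel} is CPTP. I would verify this concretely, without just citing it, using the Stinespring form shown in Figure~\ref{fig:circuit-h4}.

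\textbf{Step 1: the building blocks.} For $\AD_\alpha$ I would write the Kraus operators
\[
K_0=|0\rangle\langle 0|+\sqrt{\alpha}\,|1\rangle\langle 1|,\qquad K_1=\sqrt{1-\alpha}\,|0\rangle\langle 1| .
\]
A direct check gives $K_0^\dagger K_0+K_1^\dagger K_1=I$ for every $\alpha\in[0,1]$. Since $\alpha_r=|\lambda_r|\in[0,1]$ by \eqref{eq:lambda-range}, this covers all admissible modes. Equivalently, $\AD_\alpha(\rho)=\Tr_a[U(\rho\otimes|0\rangle\langle0|_a)U^\dagger]$ for the unitary $U$ in Figure~\ref{fig:circuit-h4}, and a unitary conjugation followed by a partial trace is CPTP. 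SWAP is a unitary conjugation, hence also CPTP.

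\textbf{Step 2: one mode.} $\AD_{\alpha_r}\otimes\AD_{\alpha_r}$ has Kraus operators $K_i\otimes K_j$, and these satisfy $\sum_{ij}(K_i\otimes K_j)^\dagger(K_i\otimes K_j)=I\otimes I$. Composing with SWAP preserves this identity. Hence $\Tr[\Ec_r(\rho_r)]=\Tr[\rho_r]$ for every operator $\rho_r$, not only for the encoded diagonal states. In particular this holds for admissible states, which have unit trace. The branch in \eqref{eq:signed-channel} is fixed classically by $\mathrm{sign}(\lambda_r)$, so no measurement enters. Defining $p_{\mathrm{succ}}^{(r)}$ as the trace of the unnormalized output conditioned on the accepted outcome, there is only the trivial outcome, and so $p_{\mathrm{succ}}^{(r)}=1$. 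This is independent of $\lambda_r$, $S_r$ and $\rho_r$, because none of them entered the argument beyond $\alpha_r\in[0,1]$.

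\textbf{Step 3: the product channel.} For \eqref{eq:cor-product}, the Kraus operators of the tensor product are tensor products of the factor Kraus operators. The completeness relation then factorizes, giving $\sum K^\dagger K=I$ on the full register. By induction over the finite index set $\Dd\times\Omega_h\times\{1,\dots,T\}$, the product is CPTP. Its success probability is the product of per-factor probabilities, $\prod 1=1$, so no factor in $|\Dd|$, $|\Omega_h|$ or $T$ appears.

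The main obstacle is interpretive rather than technical. The product in \eqref{eq:cor-product} is a formal tensor product over time, but in a real pipeline the timesteps are interleaved with re-encoding, decoding and streaming, which are outside the theorem. I would therefore state explicitly that the claim concerns only the dissipative channel stages. Each use acts on a freshly encoded (or any) normalized state, so determinism holds stagewise, whatever the host does between stages. This is also why I insisted in Step~2 on trace preservation for arbitrary inputs rather than only encoded product states.
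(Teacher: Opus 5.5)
Your proposal is correct and follows essentially the same route as the paper: amplitude damping and SWAP are CPTP, CPTP maps are closed under tensor products and composition, and trace preservation then gives unit success probability for each $\Ec_r$ and for the product channel. The differences are minor. You verify the Kraus completeness relation explicitly where the paper only cites it, and you add a scoping remark that the tensor product over timesteps should be read stagewise on freshly encoded registers, which is consistent with Remark~\ref{rem:cor-scope}.
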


\begin{proof}
Amplitude damping $\AD_{\alpha_r}$ with $\alpha_r\in[0,1]$ is CPTP \cite{nielsen2010qcqi,watrous2018quantum}. Tensor products of CPTP maps are CPTP, and composition with the unitary SWAP (a CPTP map) preserves CPTP. Both branches of \eqref{eq:signed-channel} are therefore CPTP, giving \eqref{eq:cor-det} by trace preservation. The product channel \eqref{eq:cor-product} is a tensor product of CPTP maps and is itself CPTP, hence trace-preserving, which gives unit end-to-end success probability.
\end{proof}

\begin{remark}[Scope of Corollary~\ref{cor:deterministic}]\label{rem:cor-scope}
The statement concerns only the dissipative CPTP block $\bigotimes_r\Ec_r$. Preparation of the rails from classical side information or from a coherent population register, equilibrium loading, nonequilibrium extraction, reconstruction $f^\star=M^{-1}m^+$, streaming, boundary treatment, and finite-shot readout are separate pipeline stages (see Section~\ref{sec:integration}); their probabilistic costs, if any, are not covered by Corollary~\ref{cor:deterministic}.
\end{remark}

Because admissible scale choices $S_r(\bm x,t)$ may vary across nodes, times, and modes, it is useful to state explicitly what remains fixed and what is classically indexed in the construction.

\begin{proposition}[Adaptive scales: classically indexed encode--channel--decode family]\label{prop:adaptive}
Suppose $S_r=S_r(\bm x,t)$ may vary with node and time (subject to $p_r^\pm\in[0,1]$). For fixed relaxation physics, the middle map $\Ec_r$ in \eqref{eq:signed-channel} depends only on $\lambda_r$ (hence on the chosen $s_r$), not on $S_r$. What adapts with $S_r(\bm x,t)$ is the \emph{composite} update $\Dc_r\circ\Ec_r\circ(\text{encode})$ on dissipative mode~$r$. Thus, as $(\bm x,t)$ and $r$ vary, one obtains a \emph{nodewise}, classically indexed family of encode--channel--decode maps; equivalently, classical side information $(\bm x,t)$ (and mode label~$r$) indexes which composite acts.
It is not, in general, a single fixed CPTP map acting directly on the raw (unencoded) nonequilibrium moments~$\delta m_r$ (cf.\ Remark~\ref{rem:nonlinear-encoding}).
\end{proposition}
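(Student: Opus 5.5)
The plan is to unpack the composite $\Dc_r\circ\Ec_r\circ\mathrm{enc}_r$ at a fixed node $(\bm x,t)$ and mode $r$, and to isolate which of its three factors depend on $S_r$. The proof then reduces to a dependency audit plus one explicit counterexample.

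\emph{Step 1: the middle channel.} From \eqref{eq:signed-channel}, $\Ec_r$ is built from $\AD_{\alpha_r}$ with $\alpha_r=|\lambda_r|$ and, when $\lambda_r<0$, a SWAP. Both ingredients are determined by $\lambda_r=1-s_r$ alone, so $\Ec_r$ is one fixed CPTP map on the two-qubit rail register for every $(\bm x,t)$ sharing the same $s_r$. In the Stinespring form of Figure~\ref{fig:circuit-h4} this is visible directly: the rotation angle $\theta_r=2\arccos\sqrt{|\lambda_r|}$ contains no $S_r$.

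\emph{Step 2: the outer maps.} The encoder \eqref{eq:rails} divides by $S_r(\bm x,t)$, and the decoder \eqref{eq:decoder} multiplies by the same value. Hence the composite is the triple $(\mathrm{enc}_{S_r(\bm x,t)},\Ec_r,\Dc_{S_r(\bm x,t)})$, indexed by the classical data $(\bm x,t,r)$. This is the claimed nodewise family. By Theorem~\ref{thm:main} applied at each index, every member decodes to $\lambda_r\delta m_r$, so the family is consistent. Indexing by $(\bm x,t)$ is legitimate because Theorem~\ref{thm:main} holds for any valid fixed scale, and its proof uses $S_r$ only as the common prefactor that cancels between encoding and decoding.

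\emph{Step 3: the negative claim.} This is the only step with content. I must show that the composite is not, in general, a single fixed CPTP map acting directly on the raw $\delta m_r$. The approach is to use Remark~\ref{rem:nonlinear-encoding}: for fixed $S$, the state $\rho_r(\delta m)$ depends nonlinearly on $\delta m$, because of the $\max(\cdot,0)$ split. A concrete witness is the pair $\delta m=\pm a$. Its average $\tfrac12\rho_r(a)+\tfrac12\rho_r(-a)$ has both rails populated at $a/(2S)$, whereas $\rho_r(0)=|00\rangle\langle00|$. So encoding does not commute with convex mixing, and the pipeline cannot be read as a linear (in particular CPTP) map whose input is a state-valued embedding of $\delta m_r$.

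In addition, the rail states themselves vary with $S_r$. For a fixed $\delta m\neq0$, two different admissible scales produce different rail states, $p^\pm=|\delta m|/S$, so no $S$-independent quantum object represents $\delta m_r$. I expect the main obstacle to be phrasing ``single fixed CPTP map on raw moments'' precisely enough to refute. I would formalize it as a map $\Phi$ together with an affine embedding of $\delta m_r$ into states, and then exhibit the mixing counterexample above as the failure of affinity.
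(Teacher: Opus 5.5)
Your proposal follows the paper's proof. $\Ec_r$ depends only on $\lambda_r$, the encoder/decoder pair carries $S_r(\bm x,t)$ so the triple is classically indexed, and the negative claim rests on the nonlinearity of the sign split (Remark~\ref{rem:nonlinear-encoding}), which the paper merely cites and you sharpen with the $\pm a$ mixing witness. One tightening: to exclude $\Phi\circ e$ for an \emph{arbitrary} affine embedding $e$, apply the witness to the state-level output $\delta m_r\mapsto\Ec_r(\rho_r(\delta m_r))$ rather than to the encoder alone, and note that the claim is genuinely state-level, since the decoded map $\delta m_r\mapsto\lambda_r\delta m_r$ is itself linear and $S_r$-independent. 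After $\Ec_r$ the mixture has both rail populations equal to $|\lambda_r|a/(2S_r)$, versus $0$ for $\Ec_r(\rho_r(0))=|00\rangle\langle 00|$, so non-affinity persists exactly when $\lambda_r\neq 0$, consistent with ``in general''.
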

\begin{proof}
Fix $(\bm x,t)$ and $r\in\Dd$. Encoding with the scale $S_r(\bm x,t)$ produces a product state $\rho_r$ on the rails; $\Ec_r$ in \eqref{eq:signed-channel} is CPTP and independent of $S_r$. Letting $(\bm x,t)$ vary replaces one such triple $(\text{encode},\Ec_r,\text{decode})$ by another only through the classically chosen~$S_r$. The composite map on unencoded $\delta m_r$ alone incorporates this $S_r$-dependent encoding and need not coincide with any single CPTP channel on the moment variable (Remark~\ref{rem:nonlinear-encoding}).
\end{proof}

\begin{remark}[Node-level composite dissipative update]\label{rem:composite}
With disjoint two-qubit registers for distinct $r\in\Dd$, applying $\Ec_r$ mode-wise and decoding with the same $S_r$ used in encoding yields $\delta m_r'=\lambda_r\delta m_r$ simultaneously for all $r\in\Dd$, i.e.\ the full dissipative moment vector update matches the classical MRT operator on the dissipative subspace. This follows from Theorem~\ref{thm:main} and independence of registers.
\end{remark}

\begin{remark}[Stencil and modewise structure]
Theorem~\ref{thm:main} assumes only modewise diagonal relaxation \eqref{eq:mrt-collision} with $\lambda_r\in[-1,1]$ and invertible $M$; block-coupled collisions in the same moment basis are out of scope.
\end{remark}

\begin{remark}[Encoding and decoded observables]
Only the normalized rail populations $p_r^\pm$ must lie in $[0,1]$; raw $\delta m_r$ need not. The exactness claim is for the diagonal two-rail encoding and decoded expectations of rail number operators.
\end{remark}

\subsection{Practical interpretation}\label{subsec:practical}
At an operational level, Theorem~\ref{thm:main} and its surrounding lemmas can be read as a short list of circuit-level facts, independent of the specific lattice or moment basis in use.

\begin{enumerate}[leftmargin=2.2em,label=(I\arabic*)]
\item \textbf{One dissipative mode $=$ two rails.} Each $r\in\Dd$ is carried by its own two-qubit register $(p_r^+,p_r^-)$; no coupling between distinct modes is required.
\item \textbf{Dissipation is amplitude damping with survival $|\lambda_r|$.} The contractive factor $|\lambda_r|\le 1$ is realized physically as the survival probability of the $|1\rangle$ population on each rail under $\AD_{|\lambda_r|}$, not as a block-encoded amplitude rescaling requiring post-selection.
\item \textbf{Sign reversal is a rail SWAP, not a negative amplitude.} The overrelaxation branch $\lambda_r<0$, which is \emph{not} a free byproduct of any unitary acting on nonnegative populations, is captured by a classically conditioned SWAP of the two rails after damping. There is no ``negative amplitude,'' no coherent conditional on the sign of $\lambda_r$, and no additional heralding wiring.
\item \textbf{The channel stage is deterministic.} The map $\Ec_r$ in \eqref{eq:signed-channel} is CPTP by construction, so its per-application success probability is $1$ and the dissipative sweep does not accumulate any multiplicative success-probability penalty across modes, sites, or timesteps (Corollary~\ref{cor:deterministic}).
\item \textbf{Modes are mutually independent.} Since registers for distinct $r\in\Dd$ are disjoint, the full dissipative block is a tensor product of per-mode channels and can be executed in parallel over $\Dd$ without cross-talk (Remark~\ref{rem:composite}).
\item \textbf{The encoding scale $S_r$ is classical side information.} $S_r(\bm x,t)$ is a deterministic, classically tracked quantity used consistently at encoding and decoding, not a hidden quantum resource (Proposition~\ref{prop:adaptive}).
\item \textbf{Stencil- and basis-agnostic.} The construction assumes only modewise diagonal relaxation \eqref{eq:mrt-collision} in some invertible moment basis $M$ with $\lambda_r\in[-1,1]$; it does not depend on the particular velocity stencil (D2Q9, D3Q19, \ldots) or on orthogonality of $M$.
\end{enumerate}

\section{Block-encoding vs CPTP dissipation: the practical advantage}\label{sec:blockenc-vs-cptp}

The purpose of this section is to make explicit what the construction of Section~\ref{sec:encoding} delivers relative to the textbook route for realizing the contraction $\delta m_r\mapsto\lambda_r\delta m_r$ on a quantum computer. The essential comparison is not how the contraction is computed, but what price is paid per application in \emph{success probability}.

\subsection{The block-encoding route and its bottleneck}\label{subsec:be-bottleneck}
A standard way to implement a contractive, non-unitary linear map $A$ with $\|A\|\le 1$ on a quantum computer is to block-encode it inside a larger unitary $U$ acting on system and ancilla registers, so that a projection onto the $|0\rangle$-ancilla subspace yields $A$ (up to subnormalization $\alpha\ge \|A\|$) on the system~\cite{nielsen2010qcqi,watrous2018quantum}. For a modewise scalar contraction $\lambda_r$ with $|\lambda_r|\le 1$, the per-application success probability on a generic input state is bounded above by
\begin{equation}
 p_{\mathrm{succ}}^{(r)} \;\le\; \frac{|\lambda_r|^2}{\alpha_{r}^{2}} \;\le\; |\lambda_r|^{2},
 \label{eq:be-succ-mode}
\end{equation}
with equality achievable only in the tight block-encoding case $\alpha_r=1$. Composed independently across dissipative modes $r\in\Dd$, lattice sites $\bm x\in\Omega_h$, and timesteps $t=1,\ldots,T$, the end-to-end success probability for a block-encoded MRT dissipative sweep is bounded by
\begin{equation}
 p_{\mathrm{succ}}^{\mathrm{end}}
 \;\le\;
 \prod_{t=1}^{T}\prod_{\bm x\in\Omega_h}\prod_{r\in\Dd}
 |\lambda_r(\bm x,t)|^{2},
 \label{eq:be-succ-end}
\end{equation}
which tends to zero as either $|\Omega_h|\to\infty$ or $T\to\infty$ whenever some $|\lambda_r|<1$ (the generic case away from conservation and strict involutive limits). Amplitude amplification sharpens the prefactor but does not alter the asymptotic statement that \eqref{eq:be-succ-end} decays multiplicatively in $T$ and $|\Omega_h|$. More sophisticated constructions, including global block encodings that treat the diagonal operator $\Lambda$ over modes and sites jointly with a single subnormalization $\alpha\ge\max_r|\lambda_r|$, likewise improve the per-application prefactor by replacing the naive product bound with a state-dependent quantity $\|\Lambda\otimes I\,|\psi\rangle\|^{2}/\alpha^{2}$; they do not, however, remove the multiplicative decay over $T$ under repeated application when at least one dissipative mode is strictly contractive. This compounded probability is widely identified in the QLBM literature as the principal obstacle to end-to-end execution. Analogous multiplicative post-selection penalties, of the same structural form though not expressed in the mode-wise factors $|\lambda_r|$, arise in block-encoded Carleman-linearized LB pipelines~\cite{sanavio2024lbq}, whose depth/shot budgets are further driven by the non-sparse, non-block-diagonal structure of the block-encoded Carleman collision matrix~\cite{sanavio2024lbq}.

\paragraph{The same bottleneck under LCU.}
The linear-combination-of-unitaries (LCU) realization of a contraction $A=\sum_k c_k U_k$ with $c_k\ge 0$ and $\sum_k c_k=\|c\|_1$ prepares a ``select'' ancilla in $|c\rangle\propto\sum_k\sqrt{c_k}|k\rangle$, applies the multiplexed unitary $\sum_k |k\rangle\langle k|\otimes U_k$, and uncomputes the ancilla; projection of the ancilla on $|c\rangle$ extracts $A|\psi\rangle/\|c\|_1$, so the per-application success probability on a pure input is bounded by $\|A|\psi\rangle\|^{2}/\|c\|_1^{2}$, of the same form as \eqref{eq:be-succ-mode} with $\alpha_r$ replaced by $\|c\|_1$~\cite{nielsen2010qcqi,watrous2018quantum}. The linear-equilibrium QLBM of Liu, John, and Emerson~\cite{liu2026qlbmlinear} is a concrete realization of this route: they cast the Bhatnagar--Gross--Krook (BGK) collision (with bounce-back absorbed) as a non-unitary matrix $C$ in the velocity basis and implement it through singular value decomposition (SVD) followed by LCU of the resulting diagonal singular-value factor. The same bound therefore applies to their construction, and, a fortiori, to any mode-wise MRT extension in which $\Lambda=\mathrm{diag}(\lambda_r)$ is realized by LCU of its nonnegative-eigenvalue SVD factor: the product structure in \eqref{eq:be-succ-end} is inherited over $(T,|\Omega_h|,|\Dd|)$. The common cause is structural: both block-encoding and LCU target a \emph{coherent linear-operator} realization, in which the output is required to be the pure state $A|\psi\rangle$ up to normalization, and the lost norm necessarily reappears as failure probability on a measured ancilla.

\subsection{The deterministic CPTP route}\label{subsec:cptp-route}
The channel $\Ec_r$ defined in \eqref{eq:signed-channel} is completely positive and trace-preserving by construction: amplitude damping is CPTP~\cite{nielsen2010qcqi,watrous2018quantum}, tensor products of CPTP channels are CPTP, and composition with a unitary SWAP preserves CPTP. Consequently
\begin{equation}
 \Tr\!\big[\Ec_r(\rho_r)\big] \;=\; \Tr[\rho_r] \;=\; 1
 \qquad \text{for every valid two-rail input state } \rho_r,
\end{equation}
so the per-application success probability is $p_{\mathrm{succ}}^{(r)}=1$, independent of $\lambda_r$. The contraction on decoded populations arises from coherent probability transfer to the damping ancilla, which is subsequently discarded (or reset) rather than post-selected. The end-to-end success probability of a full MRT dissipative sweep is therefore
\begin{equation}
 p_{\mathrm{succ}}^{\mathrm{end}}
 \;=\;
 \prod_{t=1}^{T}\prod_{\bm x\in\Omega_h}\prod_{r\in\Dd} 1
 \;=\; 1,
\end{equation}
for every $T$ and every lattice size. The overrelaxation branch $\lambda_r<0$, which is sign-reversing and thus \emph{not} a free byproduct of block-encoded amplitude control, is absorbed into a \emph{classically} conditioned SWAP on the two rails at no probability cost.

\paragraph{Lindblad/GKSL viewpoint.}
At the density-matrix level, Theorem~\ref{thm:main} implements the dissipative collision as a completely positive trace-preserving channel built from amplitude damping, i.e.\ the standard open-system mechanism of Gorini--Kossakowski--Sudarshan--Lindblad (GKSL) theory, rather than as closed-system unitary (Schr\"odinger) evolution~\cite{nielsen2010qcqi,watrous2018quantum,manzano2020lindblad}. The continuous-time GKSL generator
\begin{equation}
 \dot\rho \;=\; -\mathrm{i}[H,\rho] \;+\; \sum_{k} \gamma_{k}\!\left(L_{k}\,\rho\,L_{k}^{\dagger} \;-\; \tfrac12\bigl\{L_{k}^{\dagger}L_{k},\,\rho\bigr\}\right),
 \qquad \gamma_{k}\ge 0,
 \label{eq:gksl}
\end{equation}
is the most general generator of a Markovian semigroup of CPTP maps on a finite-dimensional system~\cite{manzano2020lindblad}.
GKSL is normally written in continuous time ($\dot\rho=\mathcal{L}[\rho]$); here each LBM collision step applies one discrete CPTP map~$\Ec_r$, without identifying that step with a specific $e^{\Delta t\mathcal{L}}$ unless a lattice--time embedding is fixed separately.

In structural terms, the moment-basis transform $m=Mf$ plays the role of the diagonalization that decouples the dissipative GKSL generator into independent single-channel jump terms. The full dissipative collision factorizes over modes as $\Ec^{\mathrm{diss}}=\bigotimes_{r\in\Dd}\Ec_r$ (Remark~\ref{rem:composite}), with each $\Ec_r$ a single-channel amplitude-damping map on its two-rail encoding (Section~\ref{sec:encoding}). Under this analogy: the dissipative MRT index $r\in\Dd$ corresponds to the GKSL jump index $k$ in~\eqref{eq:gksl}; the MRT rate $s_r$ corresponds to the dissipation rate $\gamma_k$; the rail-lowering Kraus operator inside $\Ec_r$ corresponds to the jump operator $L_k$; and the conserved moments $(\rho,\bm j)$, which are exact fixed points of $\Ec^{\mathrm{diss}}$ because $\lambda_r=1$ on conserved $r$, correspond to the fixed-point manifold of the dissipative generator. If, in addition, a lattice--time embedding is fixed by identifying one MRT step with a continuous time interval $\Delta t$, then for $\lambda_r=1-s_r>0$ the per-step contraction $\alpha_r=|1-s_r|$ is the time-$\Delta t$ sample
\begin{equation}
 \alpha_r \;=\; e^{-\gamma_r\,\Delta t},
 \qquad
 \gamma_r \;=\; -\frac{1}{\Delta t}\,\ln\bigl|1-s_r\bigr|,
 \label{eq:lindblad-rate}
\end{equation}
of an exponential single-jump GKSL semigroup acting on the residual encoding. The overrelaxation branch $\lambda_r<0$ has no continuous-time semigroup analogue at the same single rate alone: it is realized in Theorem~\ref{thm:main} as that same amplitude-damping step composed with a unitary rail SWAP (Section~\ref{sec:encoding}), so it is a unitary dressing of a one-jump GKSL semigroup rather than an element of one.

In the same spirit as the standard Lindbladian jump terms in~\eqref{eq:gksl}, which dissipate coherences and drive $\rho$ toward an associated fixed manifold, MRT collision dissipates nonequilibrium kinetic information by relaxing moments toward $m^{\mathrm{eq}}$ (diagonally, $\delta m_r'=\lambda_r\delta m_r$ in \eqref{eq:mrt-collision}), or populations toward $f^{\mathrm{eq}}$ in a BGK-type picture $\Omega_{ij}(f_j-f_j^{\mathrm{eq}})$ after a change of basis; the analogy is at the level of \emph{relaxation to local equilibrium} and of a diagonalization-into-independent-channels structure, not an identity of linear algebraic form between superoperators on $\rho$ and the classical collision matrix on $(f,m)$.

\paragraph{Stinespring dilation, not coherent linear-operator realization.}
The structural reason the bound \eqref{eq:be-succ-mode} does not apply to the primitive proved here is that Section~\ref{sec:encoding} realizes a completely positive trace-preserving \emph{channel} on a rail density matrix, not a coherent linear operator on a pure rail state. By Stinespring's theorem~\cite{nielsen2010qcqi,watrous2018quantum}, every CPTP map admits the form
\begin{equation}
 \Ec_r(\rho_r) \;=\; \Tr_E\!\big[U_r\,(\rho_r\otimes |0\rangle\langle 0|_E)\,U_r^{\dagger}\big],
 \label{eq:stinespring}
\end{equation}
for some unitary $U_r$ on system plus damping environment $E$, in which the environment is \emph{discarded} (or reset) rather than measured and post-selected. Trace preservation is therefore automatic, and the per-application success probability is $1$ by construction (Corollary~\ref{cor:deterministic}). The bound that governs block-encoding \eqref{eq:be-succ-mode} and LCU is a theorem about the coherent recovery of $A|\psi\rangle$ from a pure input via a projectively measured ancilla; it does not apply to the Stinespring/discard realization solved here, in which the decoded output \eqref{eq:decoder} is a linear functional of the rail density matrix on diagonal observables. The contraction $p_r^\pm\mapsto\alpha_r p_r^\pm$ induced by amplitude damping on rail populations is, at the level of coherent amplitudes, a non-unitary map whose lost norm is coherently \emph{transferred} to the damping ancilla and subsequently reset; what is saved relative to block-encoding and LCU is probability loss, not the entropy/Landauer cost of the reset itself, which is charged to the environment in the standard open-system way.

\subsection{What is traded in exchange}\label{subsec:tradeoffs}
The deterministic CPTP realization is not free: it trades post-selection for a different, classically accountable, set of costs.
\begin{enumerate}[leftmargin=2.2em,label=(T\arabic*)]
\item \textbf{A nonlinear classical encoding step.} The map $\delta m_r\mapsto(p_r^+,p_r^-)$ in \eqref{eq:rails} is nonlinear because of the positive/negative split (Remark~\ref{rem:nonlinear-encoding}). In a hybrid loop where $\delta m_r$ is classically known at each use, this split is free; in a fully coherent pipeline, a reversible sign-split gadget is required and must be accounted for separately.
\item \textbf{Two rails per dissipative mode.} One extra qubit per mode relative to a hypothetical single-rail encoding. This is a \emph{qubit} overhead, not a probability overhead.
\item \textbf{Classically indexed scale.} The encoding scale $S_r(\bm x,t)$ is classical side information used consistently in encoding and decoding (Proposition~\ref{prop:adaptive}); it does not introduce coherent ancilla cost, but it must be tracked.
\end{enumerate}
In every QLBM pipeline known to the authors, converting per-step probability loss \eqref{eq:be-succ-end} into the classical bookkeeping (T1)--(T3) is the strictly dominant trade.

\subsection{Summary comparison}\label{subsec:compare-table}

\begin{table}[htbp]
\centering
\caption{Methodological comparison of the two possible routes for realizing the diagonal MRT dissipative contraction $\delta m_r\mapsto \lambda_r\delta m_r$ on a quantum computer: the standard block-encoding route with ancilla post-selection (Section~\ref{subsec:be-bottleneck}), and the deterministic two-rail CPTP route introduced here (Theorem~\ref{thm:main}, Section~\ref{subsec:cptp-route}). The comparison is between the two \emph{approaches}, not between specific published works; entries on the right-hand column refer to the CPTP primitive of this paper. Quantities are per dissipative mode unless stated otherwise.}
\label{tab:be-vs-cptp}
\small
\begin{tabular}{lcc}
\toprule
Quantity & Block-encoding route (post-selected) & Two-rail CPTP route (this paper) \\
\midrule
Per-step success probability & $\le |\lambda_r|^2/\alpha_r^2$ & $1$ \\
End-to-end over $(T,|\Omega_h|,|\Dd|)$ & decays multiplicatively & $1$ \\
Overrelaxation $\lambda_r<0$ & extra conditional structure & classically conditioned SWAP \\
Ancilla per mode & $\ge 1$ flag, post-selected & $1$ damping ancilla, reset \\
Post-selection / heralding & required & none \\
Amplitude amplification & typically required & not applicable \\
Qubits carrying $\delta m_r$ & $1$ rail & $2$ rails (sign split) \\
Encoding of $\delta m_r$ & linear & nonlinear (sign split; Rem.~\ref{rem:nonlinear-encoding}) \\
Scale $S_r$ handling & implicit in normalization & explicit classical side info \\
\bottomrule
\end{tabular}
\end{table}

\subsection{Relation to recent QLBM work}\label{subsec:literature}
Representative algorithmic lines in the recent literature include nonlinear fluid QLBM~\cite{wang2025npj}, linearized advection--diffusion lattice Boltzmann schemes with dynamic circuits and related linear-collision QLBM formulations~\cite{wawrzyniak2025cpc,xu2025pre,zeng2025pof}, linear-equilibrium constructions that treat collision and streaming linear-algebraically and decompose the collision via SVD--LCU (singular value decomposition and linear combination of unitaries)~\cite{liu2026qlbmlinear}, multi-circuit resource reduction~\cite{lee2026multiple}, one-step device-oriented lattice Boltzmann algorithms~\cite{bastida2026qlbmdevices}, lattice Boltzmann--Carleman quantum algorithms and circuits for moderate Reynolds number flows~\cite{sanavio2024lbq}, learned surrogate circuits for dissipative BGK collisions~\cite{lacatus2026surrogate}, and a recent synthesis of realizability issues~\cite{tiwari2025algorithmic}. With the exception of learned surrogates, all of the above face, to varying degrees, multiplicative post-selection or heralding penalties in their dissipative/contractive block: the mode-wise form \eqref{eq:be-succ-end} applies directly to block-encoded MRT schemes, and structurally analogous decays arise in LCU-based realizations of linear collision operators~\cite{liu2026qlbmlinear} and in block-encoded Carleman-linearized collision matrices~\cite{sanavio2024lbq} (with the caveat, in the Carleman case, that the non-unitarity is carried by a non-diagonal operator in velocity/Carleman-variable space rather than by a moment-basis diagonal $\Lambda$). Theorem~\ref{thm:main} provides a block-encoding-free alternative for the mode-diagonal case: it isolates the dissipative MRT block as a deterministic CPTP primitive, leaving equilibrium loading, streaming, and boundaries to be composed on top, as discussed in Section~\ref{sec:integration}.

\section{Integration in an MRT collision pipeline}\label{sec:integration}

Theorem~\ref{thm:main} establishes a block-encoding-free CPTP realization of the dissipative MRT step. For that result to be useful, the primitive must compose inside a recognizable circuit pipeline. This section describes, at a schematic level, how the primitive slots into a \emph{generic block-encoded MRT collision pipeline} at the site level, and comments on its relation to Carleman-linearized QLBM circuits~\cite{sanavio2024lbq}, which share the same block-encoding bottleneck but do not expose a mode-wise diagonal target that the primitive could substitute for directly. The description is deliberately generic: the host pipeline is any quantum-compatible realization of the standard MRT collision stages on a site, and the primitive replaces only the contractive dissipative block within that host. Full coherent-state preparation, streaming, boundaries, and readout are not treated here.

\subsection{A generic MRT collision pipeline}\label{subsec:generic-pipeline}
At the site level, any quantum-compatible realization of one MRT collision decomposes into four logically distinct stages, composed per timestep:

\begin{enumerate}[leftmargin=2.2em,label=(P\arabic*)]
\item \textbf{Equilibrium preparation.} Compute or encode the equilibrium $m^{\mathrm{eq}}(\rho,\bm u)$ in an arithmetic register. The specific realization of this stage, including whether division is performed explicitly or postponed by a numerator-register representation, is part of the host pipeline and is not fixed here.
\item \textbf{Nonequilibrium extraction.} Form $\delta m_r$ for each dissipative mode $r\in\Dd$ on the arithmetic register of the host pipeline.
\item \textbf{Dissipative CPTP block (Theorem~\ref{thm:main}).} For each $r\in\Dd$, prepare the signed two-rail register $\rho_r=\rho_r^+\otimes\rho_r^-$ via \eqref{eq:rails}, apply $\Ec_r$ in \eqref{eq:signed-channel}, and decode \eqref{eq:decoder} to obtain $\lambda_r\delta m_r$ on the arithmetic register. The classical sign of $\lambda_r$ selects the SWAP branch at compile time; no coherent conditional on $\lambda_r$ is required.
\item \textbf{Reconstruction.} Combine the relaxed $\delta m'$ with $m^{\mathrm{eq}}$ and apply $M^{-1}$ to obtain post-collision populations on the host's output register.
\end{enumerate}
In compact notation, stages (P2)--(P3) realize the vector-form dissipative update
\begin{equation}
\delta m' \;=\; \Lambda\,\delta m,
\qquad
\Lambda \;=\; \operatorname{diag}\!\Big((1)_{r\in\Cc},\,(\lambda_r)_{r\in\Dd}\Big),
\label{eq:vector-lambda}
\end{equation}
on the moment register, where $\Lambda$ is modewise diagonal by construction of MRT collision in the moment basis~$M$ and the conserved block of $\Lambda$ is the identity. At each site $\bm x\in\Omega_h$, stage (P3) acts on the dissipative rails as the node-level composite channel
\begin{equation}
\Ec_{\bm x}^{\mathrm{diss}} \;\eqdef\; \bigotimes_{r\in\Dd}\Ec_{r,\bm x},
\label{eq:diss-node}
\end{equation}
with each $\Ec_{r,\bm x}$ given by \eqref{eq:signed-channel}; $\Ec_{\bm x}^{\mathrm{diss}}$ is CPTP as a tensor product of CPTP maps (Corollary~\ref{cor:deterministic}) and reproduces the diagonal action of $\Lambda$ restricted to $\Dd$ on the decoded moments (Theorem~\ref{thm:main}, Remark~\ref{rem:composite}). Because registers for different dissipative modes are disjoint, stage (P3) is parallel across $r\in\Dd$. Because (P3) is deterministic, \emph{no success-probability discount accumulates across modes, sites, or timesteps}: the probabilistic cost of the whole dissipative sweep is $p_{\mathrm{succ}}=1$, matching Section~\ref{subsec:cptp-route}.

\subsection{Resource sketch per dissipative mode}\label{subsec:resources}
At the level of abstract circuit primitives, the dissipative CPTP block (P3) for one mode $r$ requires
\begin{itemize}[leftmargin=2em]
\item two rail qubits (the signed two-rail register);
\item one damping ancilla per rail (reset after use), realizing $\AD_{\alpha_r}$ with survival $\alpha_r=|\lambda_r|$ via the standard two-qubit damping gadget (one controlled rotation of angle $2\arccos\sqrt{\alpha_r}$ from rail to ancilla, one CNOT, one reset)~\cite{nielsen2010qcqi,watrous2018quantum};
\item one rail SWAP applied iff $\lambda_r<0$ (classically conditioned, compile-time);
\item no heralding flag and no post-selection control wiring.
\end{itemize}
The angle $2\arccos\sqrt{\alpha_r}$ is classical, fixed by the lattice scheme, and can be computed once per mode. Across all $|\Dd|$ dissipative modes and $|\Omega_h|$ sites, the per-timestep dissipative-block cost is linear in $|\Dd|\,|\Omega_h|$, with no probability discount. A detailed gate-count breakdown for stages (P1), (P2), and (P4) depends on the specific host pipeline and is outside the scope of this paper.

\subsection{Scope of drop-in replacement and relation to other QLBM pipelines}\label{subsec:dropin}
The dissipative CPTP block (P3) acts only on the signed two-rail register at a fixed site and mode, given a classical target relaxation multiplier $\lambda_r$. It is therefore pipeline-agnostic \emph{within the MRT family}: any host circuit whose dissipative collision is implemented as a block-encoded mode-wise diagonal contraction $\Lambda=\mathrm{diag}(\lambda_r)$ on nonequilibrium moments (equivalently, any block-encoded MRT scheme that carries $\delta m_r$ as an accessible register after a moment-basis transform) can replace its block-encoded dissipative block with Theorem~\ref{thm:main}, upgrading the contribution of that block to the end-to-end success probability from~\eqref{eq:be-succ-end} to~$1$. The cost paid in exchange is (T1)--(T3) of Section~\ref{subsec:tradeoffs}, dominated by the classical nonlinear sign split; this split is free whenever $\mathrm{sign}(\delta m_r)$ is classically available from the host's moment register.

Constructions in which the non-unitarity is \emph{not} of mode-wise diagonal form in a moment basis lie outside this drop-in statement. The Carleman-linearized Lattice Boltzmann circuit of Sanavio and Succi~\cite{sanavio2024lbq} is representative: it linearizes the BGK collision by Carleman tensor-embedding rather than by MRT mode decomposition, and its non-unitary core is the full (non-sparse, non-block-diagonal) Carleman collision matrix, not a diagonal $\Lambda$ on nonequilibrium moments. The present primitive does not substitute for that non-unitarity as such, and the Carleman truncation does not, on its own, expose MRT nonequilibrium moments $\delta m_r$ with classical sign structure; making the primitive applicable in that setting would require, additionally, an explicit moment-basis transform of the Carleman state. It is nevertheless relevant to Carleman-LB as a \emph{target for reformulation}: any Carleman-LB variant that exposes an explicit MRT mode decomposition of its relaxation block would gain the same replacement, at no probability cost for the dissipative block. A minimal such reformulation is sketched in Section~\ref{subsec:carleman-cptp} below.

\subsection{A hybrid Carleman--CPTP pipeline}\label{subsec:carleman-cptp}
The observation that motivates this subsection is that the utility of Carleman linearization in QLBM is \emph{the coherent evaluation of a nonlinear equilibrium} $f^{\mathrm{eq}}(\rho,\bm u)$, which is polynomial (typically quadratic) in~$f$; the use of Carleman for the dissipative BGK relaxation itself is a structural consequence of coupling equilibrium and relaxation in a single non-unitary operator, not an intrinsic requirement. Separating the two responsibilities, with Carleman used only for equilibrium and the MRT-diagonal open channel used for relaxation, yields a pipeline in which the present primitive applies without modification.
\paragraph{Host-agnostic composite-map decomposition.}
The combination of open-channel CPTP dissipation with the mainstream unitary machinery of QLBM is \emph{not specific to Carleman linearization}: the same schema covers Carleman linearization, linear combination of unitaries (LCU), and block-encoding (BE) interchangeably, since all three are unitary (or block-encoded unitary) realizations of \emph{linear} operators and can therefore occupy the same pre/post slots. Abstracting away this choice, a single MRT collision step at a site can be written at the \emph{composite-map} level as a sandwich of three logically distinct stages,
\begin{equation}
\boxed{\;\;
\Phi_{\mathrm{step}}
\;=\;
\underbrace{\mathcal{U}_{\mathrm{post}}}_{\substack{\text{unitary mainstream} \\ \text{(reconstruction,} \\ M^{-1}\text{, streaming, \ldots)}}}
\;\circ\;
\underbrace{\Ec^{\mathrm{diss}}}_{\substack{\text{open-channel CPTP} \\ \text{(this paper,} \\ p_{\mathrm{succ}}=1\text{)}}}
\;\circ\;
\underbrace{\mathcal{U}_{\mathrm{pre}}}_{\substack{\text{unitary mainstream} \\ \text{(Carleman / LCU /} \\ \text{block-encoding,} \\ \text{moment extract, \ldots)}}}
\;\;}
\label{eq:high-level-hybrid}
\end{equation}
\noindent where $\mathcal{U}_{\mathrm{pre}}$ is any coherent (up to fixed subnormalizations absorbed in the host encoding) preparation that delivers the nonequilibrium moments $\delta m$ on an accessible register, $\Ec^{\mathrm{diss}}=\bigotimes_{r\in\Dd}\Ec_r$ is the deterministic CPTP dissipation of \eqref{eq:signed-channel} (site-level; cf.\ \eqref{eq:diss-node}), and $\mathcal{U}_{\mathrm{post}}$ is any coherent recombination and inverse moment transform that returns post-collision populations. Equation~\eqref{eq:high-level-hybrid} is the general combination schema: \emph{the mainstream unitary machinery is confined to $\mathcal{U}_{\mathrm{pre}}$ and $\mathcal{U}_{\mathrm{post}}$, and the only intrinsically non-unitary stage is $\Ec^{\mathrm{diss}}$, which is realized deterministically with unit per-step success probability}. Any dissipation-dependent multiplicative success-probability decay of the form \eqref{eq:be-succ-end} is therefore removed at the level of the collision step itself, regardless of which unitary mainstream method is chosen for $\mathcal{U}_{\mathrm{pre}}$ and $\mathcal{U}_{\mathrm{post}}$. Concretely, \eqref{eq:high-level-hybrid} admits at least three interchangeable specializations, labeled (U1)--(U3) in the list below:
\begin{enumerate}[leftmargin=2.2em,label=(U\arabic*)]
\item \textbf{Block-encoding (BE) host.} $\mathcal{U}_{\mathrm{pre}},\mathcal{U}_{\mathrm{post}}$ are block-encoded realizations of the linear moment transform $M$, equilibrium subtraction, and inverse moment transform $M^{-1}$; the present primitive $\Ec^{\mathrm{diss}}$ replaces the block-encoded, post-selected realization of the diagonal contraction $\Lambda=\operatorname{diag}(\lambda_r)$ on dissipative modes (Sections~\ref{subsec:generic-pipeline} and~\ref{subsec:dropin}).
\item \textbf{LCU host.} $\mathcal{U}_{\mathrm{pre}},\mathcal{U}_{\mathrm{post}}$ implement the linear stages through a linear combination of unitaries (LCU), with the non-unitary contractive factor handled separately. In extant LCU-based QLBM constructions the contractive factor is itself realized by LCU of a non-unitary collision matrix, either directly or via SVD of the collision: Liu, John, and Emerson~\cite{liu2026qlbmlinear} apply SVD followed by LCU to the non-unitary BGK collision in the velocity basis; the present primitive $\Ec^{\mathrm{diss}}$ replaces that LCU-of-a-contraction step, eliminating the associated penalty $p_{\mathrm{succ}}\le\|A|\psi\rangle\|^{2}/\|c\|_1^{2}$ (notation of Section~\ref{subsec:be-bottleneck}) on the dissipative stage.
\item \textbf{Carleman host (hybrid Carleman--CPTP).} $\mathcal{U}_{\mathrm{pre}}$ additionally contains the Carleman lift and coherent evaluation of the nonlinear equilibrium, $f^{\mathrm{eq}}=E_K\tilde f$, detailed in \eqref{eq:hybrid-stages} below; $\Ec^{\mathrm{diss}}$ replaces the dissipative part of the fused Carleman-BGK block~\cite{sanavio2024lbq}.
\end{enumerate}
Here BE stands for \emph{block-encoding}, the standard embedding of a non-unitary map inside a larger unitary with ancilla post-selection (Section~\ref{subsec:be-bottleneck}), and LCU for \emph{linear combination of unitaries} (Section~\ref{subsec:be-bottleneck}, paragraph on LCU); both are the mainstream coherent techniques for the linear stages of QLBM. The quantum-algebraic realization of $\Ec^{\mathrm{diss}}$ itself (amplitude-damping gadget, Stinespring dilation, rail SWAP under $\lambda_r<0$) is the subject of Section~\ref{sec:encoding} and is not repeated here; \eqref{eq:high-level-hybrid} is stated at the composite-map level on purpose, so that the combination claim reads the same whatever the host's unitary stack happens to be.
\paragraph{Carleman equilibrium lift.}
The remainder of this subsection works out, as a concrete example, the Carleman-host specialization (U3) of the general schema~\eqref{eq:high-level-hybrid}: we fix $\mathcal{U}_{\mathrm{pre}}$ and $\mathcal{U}_{\mathrm{post}}$ to the Carleman lift plus linear moment stages detailed in \eqref{eq:hybrid-stages} below, keep $\Ec^{\mathrm{diss}}$ as the open-channel CPTP block of Section~\ref{sec:encoding}, and trace through the resulting composite map and probability budget. The parallel specializations (U1) and (U2) follow the same template with the Carleman lift dropped (no nonlinear equilibrium to evaluate). Specialization (U1) (BE host) coincides with the generic pipeline of Sections~\ref{subsec:generic-pipeline} and \ref{subsec:dropin}; specialization (U2) (LCU host) inherits the same pre/post schema, with the LCU success-probability bound of Section~\ref{subsec:be-bottleneck} (paragraph ``The same bottleneck under LCU'') replaced by $p_{\mathrm{succ}}=1$ on the dissipative stage. Both are algebraically simpler than the Carleman case and are therefore not written out explicitly here.

Let $K\ge 2$ denote the Carleman truncation order. For each site $\bm x$, introduce the truncated Carleman tower
\begin{equation}
\tilde f \;=\; f \,\oplus\, (f\otimes f) \,\oplus\,\cdots\,\oplus\, f^{\otimes K},
\label{eq:carleman-tower}
\end{equation}
amplitude-encoded on a register $|\tilde f\rangle$ of the host pipeline. Because $f^{\mathrm{eq}}(\rho,\bm u)$ is polynomial of degree $\le K$ in~$f$, there exists a linear operator $E_K$ on the Carleman tower such that
\begin{equation}
f^{\mathrm{eq}} \;=\; E_K\,\tilde f
\label{eq:carleman-equilibrium}
\end{equation}
exactly up to the standard Carleman truncation error~\cite{sanavio2024lbq}. Realizing $E_K$ as a block-encoded (or LCU) unitary is the conventional Carleman-LB task and is host-specific; the present subsection does not fix that realization.

\paragraph{Pipeline composition.}
Fixing an invertible moment basis $M$ as in Section~\ref{subsec:generic-mrt}, with conserved/dissipative index split $(\Cc,\Dd)$ and modewise multipliers $\lambda_r$ as in \eqref{eq:lambda-def}, the hybrid site-level update decomposes into the following stages, replacing the single fused Carleman-BGK block with a composition of linear stages and a mode-wise CPTP block:
\begin{equation}
\begin{aligned}
&\text{(H1)}\quad |f\rangle \;\longrightarrow\; |\tilde f\rangle
&&\text{(Carleman lift to order $K$)},\\
&\text{(H2)}\quad m \;=\; M f,\qquad m^{\mathrm{eq}} \;=\; M\,E_K\,\tilde f
&&\text{(linear; moment-basis transform)},\\
&\text{(H3)}\quad \delta m \;=\; m - m^{\mathrm{eq}}
&&\text{(linear; on the moment register)},\\
&\text{(H4)}\quad \delta m'_r \;=\; \lambda_r\,\delta m_r
&&\text{(Theorem~\ref{thm:main}; CPTP open channel, } p_{\mathrm{succ}}=1\text{)},\\
&\text{(H5)}\quad m^+ \;=\; m^{\mathrm{eq}} + \delta m'
&&\text{(linear; reconstruction)},\\
&\text{(H6)}\quad f^\star \;=\; M^{-1}\,m^+
&&\text{(linear; inverse moment transform).}
\end{aligned}
\label{eq:hybrid-stages}
\end{equation}
Equivalently, at the composite-map level,
\begin{equation}
f^\star
\;=\;
M^{-1}\!\bigl(\,M E_K\,\tilde f \;+\; \Lambda\,(M f - M E_K\,\tilde f)\,\bigr),
\qquad
\Lambda = \operatorname{diag}\!\bigl((1)_{r\in\Cc},(\lambda_r)_{r\in\Dd}\bigr),
\label{eq:hybrid-composite}
\end{equation}
in which the action of $\Lambda$ on the dissipative modes is realized by the CPTP channel $\bigotimes_{r\in\Dd}\Ec_r$ of \eqref{eq:signed-channel} on signed two-rail registers (Theorem~\ref{thm:main}, Remark~\ref{rem:composite}). The resulting register-level pipeline, with (H1)--(H6) slotted into the host-agnostic decomposition \eqref{eq:high-level-hybrid}, is summarized in Figure~\ref{fig:pipeline-carleman}.

\begin{figure}[htbp]
\centering
\includegraphics[width=\textwidth]{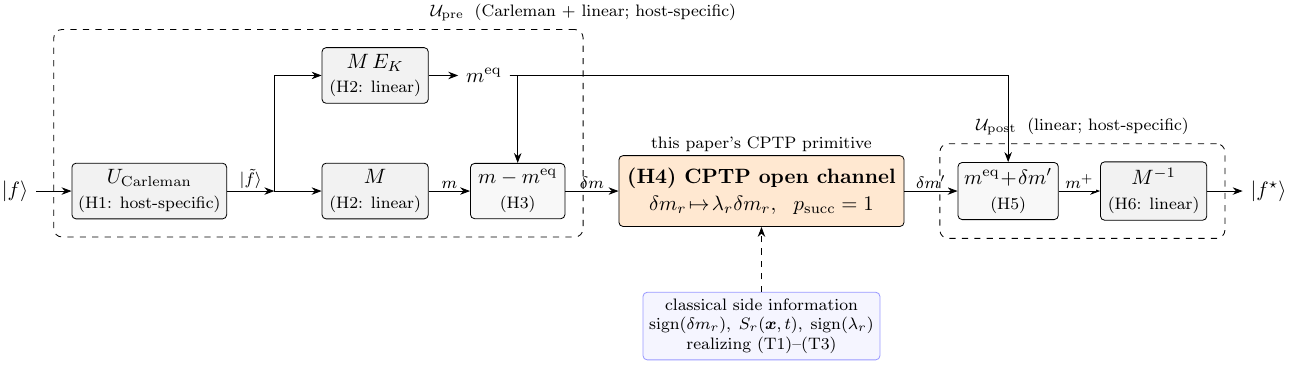}
\caption{Register-level block diagram of the hybrid Carleman--CPTP collision step \eqref{eq:hybrid-stages}, realizing specialization (U3) of the host-agnostic schema \eqref{eq:high-level-hybrid}. The outer dashed frames mark the induced grouping $\mathcal{U}_{\mathrm{pre}}=$(H1)$\circ$(H2)$\circ$(H3) and $\mathcal{U}_{\mathrm{post}}=$(H5)$\circ$(H6). The orange block (H4) is the CPTP open channel of Theorem~\ref{thm:main}, whose signed two-rail internal structure is detailed in Figures~\ref{fig:schematic}--\ref{fig:circuit-h4} (block and gate level). The dashed arrow entering (H4) from below carries the classical side information $\bigl(\mathrm{sign}(\delta m_r),\,S_r(\bm x,t),\,\mathrm{sign}(\lambda_r)\bigr)$ realizing the bookkeeping items (T1)--(T3) of Section~\ref{subsec:tradeoffs} (encoding-scale choice: Remark~\ref{rem:choose-Sr}; adaptive scaling: Proposition~\ref{prop:adaptive}).}
\label{fig:pipeline-carleman}
\end{figure}

\paragraph{Probability budget.}
Under the hybrid decomposition \eqref{eq:hybrid-stages}:
\begin{itemize}[leftmargin=2em,topsep=0pt,itemsep=1pt,partopsep=0pt]
\item Stages (H2), (H3), (H5), (H6) are \emph{linear} and unitary up to fixed subnormalizations absorbed in the host encoding of $M$, $M^{-1}$, and arithmetic registers; they carry no \emph{dissipation-dependent} success-probability loss.
\item Stage (H4) is the only intrinsically non-unitary step in the dissipative collision. By Corollary~\ref{cor:deterministic} it is \emph{deterministic}, so
\begin{equation}
p_{\mathrm{succ}}^{\mathrm{diss,\,end}}
\;=\;
\prod_{t=1}^{T}\prod_{\bm x\in\Omega_h}\prod_{r\in\Dd} 1 \;=\; 1,
\label{eq:hybrid-psucc}
\end{equation}
removing the mode-wise penalty \eqref{eq:be-succ-end} that the fused Carleman-BGK block incurs.
\item Stage (H1) is the standard Carleman lift. Its realization cost (depth, subnormalization, truncation order) is host-determined and shared with any Carleman-based pipeline~\cite{sanavio2024lbq}; it is \emph{not} affected by the choice of dissipation primitive.
\end{itemize}
The hybrid pipeline therefore converts the success-probability penalty of the Carleman-BGK collision into the classical bookkeeping (T1)--(T3) of Section~\ref{subsec:tradeoffs} on the relaxation block, while leaving the Carleman machinery strictly as a coherent equilibrium evaluator.

\paragraph{Scope of the claim.}
The present paper proves (H4). Stages (H1)--(H3) and (H5)--(H6) are linear and standard in the LBM and Carleman-LB literature; a full specification and resource accounting of (H1) (Carleman truncation order, block-encoding of $E_K$, memory scaling in the tower \eqref{eq:carleman-tower}) and (H2), (H6) (coherent moment transforms) are host-design choices and are left to future work. The claim here is architectural: the hybrid pipeline \eqref{eq:hybrid-stages} is a concrete, minimal example of a Carleman-based QLBM construction in which the dissipative relaxation is handled by the present CPTP primitive with unit per-step success probability, illustrating the ``target for reformulation'' statement at the end of Section~\ref{subsec:dropin}.

\subsection{Scope of this section}\label{subsec:integration-scope}
The claim of this section is that the dissipative block (P3) is host-agnostic and does not contribute a success-probability penalty to the broader pipeline. Quantitative resource estimates for (P1), (P2), and (P4), coherent preparation of the rails directly from a population register, multi-step composition with streaming and boundaries, and finite-shot readout are properties of the surrounding host, not of the primitive itself, and are therefore determined by the specific architecture in which the primitive is embedded.

\section{Numerical consistency check}\label{sec:validation}
Because Theorem~\ref{thm:main} is algebraic, we verify the dissipative open-channel construction numerically in two complementary ways. The D3Q19 audit in Section~\ref{subsec:d3q19-audit} uses nonequilibrium moment fields from a \emph{full} collide--stream evolution (decaying Taylor--Green vortex) on the D3Q19 stencil, so the residuals in Table~\ref{tab:theorem-audit} and Figure~\ref{fig:theorem-audit} are probed on \emph{realistic, flow-generated} $\delta m_r$ data rather than on hand-chosen or purely random values alone. The stencil-free sweeps in Section~\ref{subsec:synthetic-audit} apply the \emph{same} encoder \eqref{eq:rails}, channel \eqref{eq:signed-channel}, and decoder \eqref{eq:decoder} to inputs $(\delta m_r,\lambda_r)\in[-X,X]\times[-1,1]$ drawn with no connection to a lattice, thereby stress-testing the per-mode map independently of a particular velocity set or flow. Together, these two checks cover both a complete classical LBM time history in a standard 3D setting and a purely algebraic, stencil-agnostic stress test.

\subsection{D3Q19 lattice audit (Taylor--Green vortex)}\label{subsec:d3q19-audit}
We check the open-channel construction on the D3Q19 stencil (Appendix~\ref{app:d3q19}) by evaluating the maximum over dissipative modes and lattice sites of $|\delta m_r'-\lambda_r\delta m_r|$ along the open-channel path. All tests use the \emph{same} decaying Taylor--Green vortex run on the periodic box $[0,L)^3$ with $L=2\pi$, $N^3=64^3$, $\tau=0.5035$, $u_0=0.1$, and adaptive per-node rail scales $S_r(\bm x,t)=\max(|\delta m_r|,\varepsilon)$; Table~\ref{tab:theorem-audit} summarizes five audits over the short window $t\in[0,\,19.6]$, while Figure~\ref{fig:theorem-audit} shows the same error series continued over $t\in[0,\,981.7]$. The reported time is the advective LBM time $t = i\,(2\pi/N)\,u_0$ after $i$ collide--stream iterations (per-iteration increment $\approx9.82\times10^{-3}$), equivalently $t^\ast\equiv t/L$ box-crossings at~$u_0$.

Five audits are reported. (i)~Standard $\lambda_r\in[-1,1]$ from the D3Q19 $s_r$ at $\tau=0.5035$, running maximum over the trajectory. (ii--iv)~Endpoint regression with $\lambda_r$ uniformly set to $-1$, $0$, $+1$ on all dissipative modes, applied to a representative Taylor--Green vortex (TGV) pre-collision snapshot, probing maximal overrelaxation (SWAP with $\alpha_r=1$), full dissipation ($\alpha_r=0$), and the identity branch. (v)~Adaptive $S_r$ coincides with~(i) on this run. In all cases the decoded output matches $\lambda_r\delta m_r$ to well within IEEE-754 double-precision roundoff.

\begin{table}[htbp]
\centering
\caption{Maximum error $\max_{r\in\Dd,\,\text{sites}}|\delta m_r'-\lambda_r\delta m_r|$ for the five audits of Theorem~\ref{thm:main} on D3Q19, in IEEE-754 double precision; run defined in Section~\ref{sec:validation}. Endpoint tests set all dissipative $\lambda_r$ to the same value and act on a TGV pre-collision snapshot.}
\label{tab:theorem-audit}
\begin{tabular}{lcc}
\toprule
Test & Quantity audited & Max error \\
\midrule
TGV trajectory, standard $\lambda_r$ & $|\delta m_r'-\lambda_r\delta m_r|$ & $4.44\times10^{-16}$ \\
Endpoint $\lambda_r=-1$ (SWAP, $\alpha=1$) & $|\delta m_r'+\delta m_r|$ & $6.16\times10^{-33}$ \\
Endpoint $\lambda_r=0$ ($\alpha=0$) & $|\delta m_r'|$ & $0$ \\
Endpoint $\lambda_r=+1$ (identity) & $|\delta m_r'-\delta m_r|$ & $6.16\times10^{-33}$ \\
TGV trajectory, adaptive $S_r(\bm x,t)$ & $|\delta m_r'-\lambda_r\delta m_r|$ & $4.44\times10^{-16}$ \\
\bottomrule
\end{tabular}
\par\medskip
\noindent
\begin{minipage}{\linewidth}
\footnotesize
\emph{Note.} For $\lambda_r=\pm 1$ we have $\alpha_r=1$ (damping is the identity; a rail \textsc{swap} is applied only if $\lambda_r=-1$). The $6.16\times10^{-33}$ entries are \emph{absolute} maxima, not $O(1)$ relative accuracies: they are set by a few sites with extremely small~$|\delta m_r|$, for which the floating-point Kraus implementation leaves a tiny residual of order $\mathcal{O}(\varepsilon_{\mathrm{mach}}\,|\delta m_r|)$ (with $S_r=\max(|\delta m_r|,\varepsilon)$; Remark~\ref{rem:choose-Sr}). The $4.44\times10^{-16}$ trajectory rows are at the expected IEEE-754 double-precision scale for a full D3Q19 time history.
\end{minipage}
\end{table}

Figure~\ref{fig:theorem-audit} resolves the standard-$\lambda_r$ error per iteration over the extended window, confirming that the machine-precision agreement of Table~\ref{tab:theorem-audit} holds along the dynamics rather than at isolated snapshots.

\begin{figure}[htbp]
\centering
\includegraphics[width=0.68\textwidth]{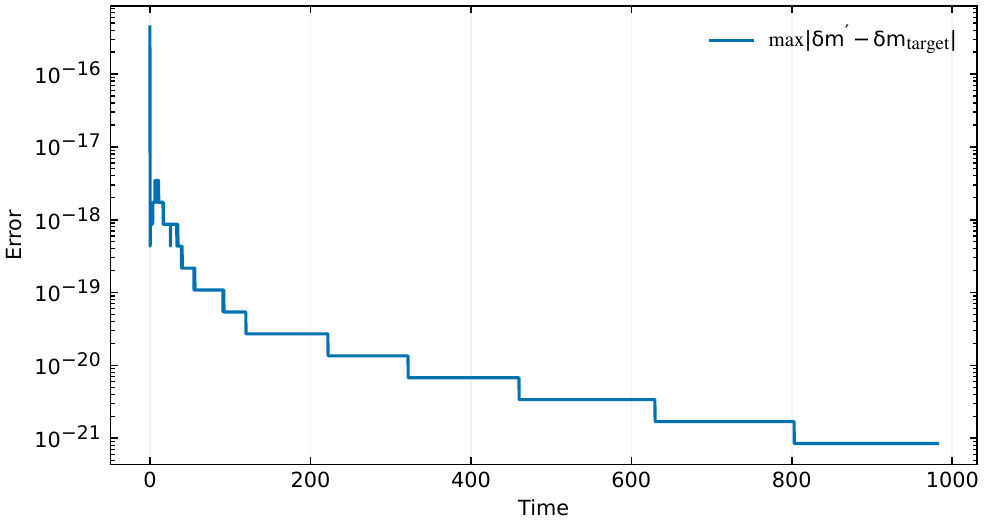}
\caption{Dissipative-mode audit $\max_{r\in\Dd,\,\text{sites}}|\delta m_r'-\lambda_r\delta m_r|$ versus advective LBM time for the run defined in Section~\ref{subsec:d3q19-audit}. The error stays at the IEEE-754 double-precision floor throughout.}
\label{fig:theorem-audit}
\end{figure}

\subsection{Stencil-free synthetic sweep}\label{subsec:synthetic-audit}
To confirm that the exactness of Theorem~\ref{thm:main} is an algebraic property of the construction, not an artifact of a particular stencil or flow, we repeat the audit on inputs that are \emph{not} drawn from any lattice simulation. Random pairs $(\delta m_r,\lambda_r)\in[-X,X]\times[-1,1]$ (with $X=1$) are fed through the same signed two-rail encoder \eqref{eq:rails}, channel \eqref{eq:signed-channel}, and decoder \eqref{eq:decoder} used in Section~\ref{subsec:d3q19-audit}, with $S_r=\max(|\delta m_r|,\varepsilon)$ in the adaptive mode, and the pointwise residual $|\delta m_r'-\lambda_r\delta m_r|$ is aggregated over samples.

Five sweeps are reported in Table~\ref{tab:synthetic-audit}:
\begin{enumerate}[leftmargin=2.2em,label=(S\arabic*)]
\item \textbf{Dense $\lambda$ grid.} $101$ uniform $\lambda$ values in $[-1,1]$ with $1000$ random $\delta m$ per $\lambda$ ($1.01\times10^{5}$ samples), covering the admissible range at resolution $2\times10^{-2}$ in $\lambda$.
\item \textbf{Uniform random $(\delta m,\lambda)$.} $200\times200=4\times10^{4}$ samples drawn jointly uniformly, probing generic interior points.
\item \textbf{Joint boundary stress.} $\lambda\in\{-1,\,-1\!+\!\varepsilon_{\mathrm{b}},\,-\varepsilon_{\mathrm{b}},\,0,\,+\varepsilon_{\mathrm{b}},\,1\!-\!\varepsilon_{\mathrm{b}},\,1\}$ with $\varepsilon_{\mathrm{b}}=10^{-12}$; for each such $\lambda$, $10^{4}$ random interior $\delta m\in[-X,X]$ are combined with the deterministic $\delta m$-edge vector $\{-X,\,-X\!+\!\varepsilon_{\mathrm{b}},\,-\varepsilon_{\mathrm{b}},\,0,\,+\varepsilon_{\mathrm{b}},\,X\!-\!\varepsilon_{\mathrm{b}},\,X\}$, so the joint corners $(\lambda\text{-edge},\,\delta m\text{-edge})$ are probed deterministically rather than only probabilistically. This regime includes exact overrelaxation ($\lambda=-1$, $\alpha_r=1$, SWAP), exact identity ($\lambda=+1$), and exact full dissipation ($\lambda=0$).
\item \textbf{Scale-sensitivity.} $200$ random $(\delta m,\lambda)$ pairs, each evaluated at $50$ log-uniform scale values $S_r\in[\max(|\delta m|,\varepsilon_S),\,10^{6}\max(|\delta m|,\varepsilon_S)]$ with $\varepsilon_S=10^{-12}$, verifying that the decoded output continues to match $\lambda\,\delta m$ to roundoff at every admissible $S_r$, consistent with Proposition~\ref{prop:adaptive}.
\item \textbf{Deterministic exact-corner audit.} The $5\times 5$ grid $(\delta m,\lambda)\in\{-X,-X/2,0,X/2,X\}\times\{-1,-1/2,0,1/2,1\}$, reader-verifiable by hand; this complements the random sweeps (S1)--(S4) with a fully reproducible grid of corner and half-corner points.
\end{enumerate}

\begin{table}[htbp]
\centering
\caption{Stencil-free synthetic audit of Theorem~\ref{thm:main}: maximum residual $|\delta m_r'-\lambda_r\delta m_r|$ over random and deterministic draws on $[-X,X]\times[-1,1]$ with $X=1$, in IEEE-754 double precision.}
\label{tab:synthetic-audit}
\begin{tabular}{lrc}
\toprule
Sweep & \#samples & Max error \\
\midrule
(S1) Dense $\lambda$ grid, random $\delta m$ & $101{,}000$ & $1.11\times10^{-16}$ \\
(S2) Uniform random $(\delta m,\lambda)$ & $40{,}000$ & $1.11\times10^{-16}$ \\
(S3) Joint boundary stress ($\lambda$- and $\delta m$-edges) & $70{,}049$ & $1.11\times10^{-16}$ \\
(S4) Scale sensitivity over $S_r$ & $10{,}000$ & $3.33\times10^{-16}$ \\
(S5) Deterministic exact corners $5\times 5$ & $25$ & $1.11\times10^{-16}$ \\
\bottomrule
\end{tabular}
\end{table}

All five sweeps saturate at the IEEE-754 double-precision floor (machine $\varepsilon\approx 2.22\times10^{-16}$; the $3.33\times10^{-16}$ entry for (S4) corresponds to at most $1.5\,\varepsilon$, consistent with a single division by $S_r$ at decode). In particular, (S3) exhibits no degradation at the exact boundaries $\lambda\in\{-1,0,+1\}$ even when combined with $|\delta m|=X$, where the channel reduces to a permutation, a reset, and the identity respectively; (S4) confirms that varying $S_r$ over six decades above $|\delta m|$ leaves the decoded output matching $\lambda\,\delta m$ at the roundoff floor, as required by Proposition~\ref{prop:adaptive}; and (S5) returns machine-precision agreement at every point of the deterministic $5\times 5$ corner grid.

\section{Conclusion}
We proved that the signed diagonal MRT dissipative update
\[
\delta m_r'=\lambda_r\delta m_r,\qquad \lambda_r\in[-1,1],
\]
admits an exact decoded-expectation realization by a \emph{deterministic} local CPTP channel on signed two-rail registers: amplitude damping with survival $|\lambda_r|$ on each rail, followed by a rail SWAP in the overrelaxation branch $\lambda_r<0$ (Theorem~\ref{thm:main}). Because the channel is trace-preserving by construction, the per-step success probability of the dissipative block is unity and does not compound multiplicatively across modes, sites, or timesteps (Corollary~\ref{cor:deterministic}), in contrast to block-encoded realizations of the same contraction whose end-to-end probability is bounded by \eqref{eq:be-succ-end}. Adaptive scale choices $S_r(\bm x,t)$ are classically indexed side information rather than hidden quantum resources (Proposition~\ref{prop:adaptive}); within the present nonnegative population-encoding framework a single rail cannot exactly represent a signed moment (Theorem~\ref{thm:one-rail}, Appendix~\ref{app:one-rail}), which motivates the two-rail split. Numerical audits, combining a D3Q19 decaying Taylor--Green vortex trajectory covering the standard $\lambda_r$ used in that run, the endpoints $\lambda_r\in\{-1,0,+1\}$, and adaptive scales (Table~\ref{tab:theorem-audit}, Figure~\ref{fig:theorem-audit}) with a stencil-free synthetic sweep over random and deterministic $(\delta m_r,\lambda_r)\in[-X,X]\times[-1,1]$, including dense $\lambda$ grids, uniform random pairs, joint boundary stress over both $\lambda$- and $\delta m$-edges, scale-sensitivity tests, and a deterministic $5\times 5$ exact-corner audit (Table~\ref{tab:synthetic-audit}), confirm machine-precision agreement with the target dissipative update across both the physical D3Q19 use case and the general algebraic domain.

This paper isolates the dissipative block of MRT collision as a block-encoding-free CPTP primitive and shows, schematically, how it slots into a generic block-encoded MRT collision pipeline (Section~\ref{sec:integration}). Block-encoded Carleman-linearized LB circuits~\cite{sanavio2024lbq} share the same structural block-encoding bottleneck but do not directly admit the present primitive as a drop-in replacement, since their non-unitarity is not of mode-wise diagonal form in a moment basis; any Carleman-LB variant that exposes an explicit MRT mode decomposition of its relaxation block would, however, inherit the replacement, and a minimal hybrid pipeline of this kind, with Carleman linearization used strictly for the nonlinear equilibrium and open-channel CPTP used for the diagonal dissipative relaxation, is sketched in Section~\ref{subsec:carleman-cptp}. Composition of the primitive into a full one-node D3Q19 MRT collision, together with exact multi-step lattice evolution under streaming and standard computational fluid dynamics (CFD) boundaries, is left to future work; Appendix~\ref{app:classical-embed} records only the algebraic consequence of embedding the primitive in a classical MRT baseline that shares the same moment basis and equilibrium.

\appendix
\phantomsection
\section*{Appendices}
\addcontentsline{toc}{section}{Appendices}
\section{D3Q19 stencil for the numerical audit}\label{app:d3q19}
The numerical consistency check in Section~\ref{sec:validation} uses standard athermal D3Q19 MRT in the d'Humi\`eres/Lallemand--Luo ordering \cite{dhumieres2002multiple,lallemand2003theory} ($q=19$, second-order $f^{\mathrm{eq}}$, $m^{\mathrm{eq}}=Mf^{\mathrm{eq}}$, collision and streaming \eqref{eq:mrt-collision}--\eqref{eq:streaming}). Conserved indices are $\Cc=\{0,3,5,7\}$ \cite{dhumieres2002multiple}; all other modes are dissipative. The tabulated relaxation parameters $s_r$ are the usual D3Q19 entries used together with this ordering (as in \cite{dhumieres2002multiple,lallemand2003theory} and in the simulation that produces Figure~\ref{fig:theorem-audit}); they are not fixed by the theorem.

\section{One-rail no-go within the present population-encoding framework}\label{app:one-rail}
This appendix records the obstruction that motivates the two-rail split in Section~\ref{sec:encoding}, \emph{within the population-encoding framework of this paper}: diagonal one-qubit population states and population-expectation decoders of the form $\rho\mapsto\sigma S\,\Tr[n\rho]$. It does not concern other one-qubit encodings (e.g.\ centered affine decoders on Bloch coordinates), which can realize a bounded signed scalar on one qubit.

\begin{theorem}[No exact signed one-rail population encoding]\label{thm:one-rail}
Fix $X>0$ and $S>0$. Consider a one-rail population representation of a scalar $x$ by diagonal one-qubit states
\begin{equation}
 \rho^{(1)}(x)=(1-p(x))\,|0\rangle\langle 0|+p(x)\,|1\rangle\langle 1|,
 \qquad p(x)\in[0,1],
\end{equation}
together with a single-rail population decoder of the form
\begin{equation}
 \Dc^{(1)}_{S,\sigma}(\rho)\eqdef \sigma\,S\,\Tr[n\rho],
 \qquad
 n=|1\rangle\langle 1|,
 \qquad
 \sigma\in\{+1,-1\}.
\end{equation}
Then there is no choice of $p:[-X,X]\to[0,1]$ such that
\begin{equation}
 \Dc^{(1)}_{S,\sigma}\!\big(\rho^{(1)}(x)\big)=x
 \qquad \text{for all } x\in[-X,X].
\end{equation}
Consequently, within this one-rail nonnegative population framework, a single rail cannot exactly encode a signed scalar on any domain containing both positive and negative values. In particular, the signed MRT dissipative update $\delta m_r'=\lambda_r\delta m_r$ on such a domain cannot be realized exactly by a single nonnegative population rail with a population-expectation decoder.
\end{theorem}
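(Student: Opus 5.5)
The plan is a direct sign argument. Since the decoder is linear in $\rho$ and the encoded state is diagonal, first evaluate the decoded value explicitly: $\Tr[n\rho^{(1)}(x)]=p(x)$, so $\Dc^{(1)}_{S,\sigma}(\rho^{(1)}(x))=\sigma S\,p(x)$. The exactness requirement is therefore equivalent to the scalar identity $\sigma S\,p(x)=x$ for all $x\in[-X,X]$, i.e.\ $p(x)=\sigma x/S$.

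Next, split on the two possible values of $\sigma$ and exhibit, in each case, a witness point in the domain that violates the constraint $p(x)\in[0,1]$. If $\sigma=+1$, take $x=-X<0$; then $p(-X)=-X/S<0$ because $X,S>0$, contradicting $p(-X)\ge 0$. If $\sigma=-1$, take $x=X>0$; then $p(X)=-X/S<0$, again a contradiction. Since $\sigma\in\{+1,-1\}$ exhausts the allowed decoders and both $\pm X$ lie in the domain, no admissible $p$ exists. Notably, no regularity of $p$ (continuity, measurability) is needed; only pointwise evaluation at one point of the sign opposite to $\sigma$ is used. The upper constraint $p\le 1$ and any choice of $S$ play no role.

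For the ``consequently'' clauses, I would run the same argument on an arbitrary domain $D$ containing some $x_+>0$ and some $x_-<0$. The value $\sigma S p(x)$ always has the fixed sign of $\sigma$ (or vanishes), so it cannot equal both $x_+$ and $x_-$. For the statement about the MRT update, the key observation is that any exact one-rail realization of $\delta m_r\mapsto\lambda_r\delta m_r$ on such a domain must in particular decode the input $\delta m_r$ itself (or, for $\lambda_r\neq0$, output $\lambda_r\delta m_r$, which still takes both signs). That would require exactly such a one-rail signed encoding, which has just been excluded.

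There is no real obstacle here. The only point requiring care is the scope caveat: the conclusion relies on the decoder being a fixed-sign multiple of a population, and I would remark that affine decoders such as $S(2p-1)$ fall outside the theorem, as the appendix already notes.
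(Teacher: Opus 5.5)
Your proposal is correct and is the paper's argument: both reduce the decoded value to $\sigma S\,p(x)$ and note that its sign is fixed by $\sigma$, so it cannot equal points of the opposite sign in $[-X,X]$. The paper phrases this as the decoded range lying in $[0,S]$ or $[-S,0]$, whereas you name the witnesses $x=\mp X$ explicitly; your treatment of the MRT clause (via $\lambda_r\neq 0$, or via the input encoding when $\lambda_r=0$) is somewhat more explicit than the paper's.
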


\begin{proof}
If $\sigma=+1$, then for every state of the stated form,
\[
\Dc^{(1)}_{S,+1}(\rho)=S\,\Tr[n\rho]=S\,p(x)\in[0,S].
\]
Hence the decoded values are nonnegative and cannot equal all $x\in[-X,X]$, which include negative values. If $\sigma=-1$, then $\Dc^{(1)}_{S,-1}(\rho)=-S\,p(x)\in[-S,0]$, so the decoded values are nonpositive and cannot equal all $x\in[-X,X]$, which include positive values. Therefore no such one-rail population encoding with the stated decoder can exactly represent a signed scalar on a domain containing both signs.
\end{proof}

\begin{remark}[Why sign reversal is the real obstruction]
Within a single nonnegative population rail, decoded values have a fixed sign by construction. The obstruction is therefore not the branch $\lambda_r<0$ \emph{per se}, but the attempt to represent a signed scalar by a single nonnegative population expectation. The two-rail split of Section~\ref{sec:encoding} resolves this by storing positive and negative parts separately and letting the decoder take their difference.
\end{remark}

\section{Classical embedding: populations and one-step consequence}\label{app:classical-embed}
This appendix is not part of the main theorem: it records standard algebraic consequences if the dissipative update of Theorem~\ref{thm:main} is combined with the same moment reconstruction and streaming as the classical MRT baseline.

Theorem~\ref{thm:main} gives $\delta m_r'=\lambda_r\delta m_r$ for each $r\in\Dd$. For conserved modes, define $\delta m_r'=\delta m_r$. The full nodewise update is therefore
\begin{equation}
 \delta m_r'=
 \begin{cases}
 \delta m_r, & r\in\Cc,\\
 \lambda_r\delta m_r, & r\in\Dd.
 \end{cases}
 \label{eq:full-delta-prime}
\end{equation}
Set
\begin{equation}
 m^+ = m^{\mathrm{eq}} + \delta m',
 \qquad
 f^\star_{\mathrm{open}} = M^{-1}m^+.
 \label{eq:open-post}
\end{equation}

\begin{assumption}[Shared baseline data]\label{ass:shared}
The open-system construction and the classical MRT baseline use the same moment basis $M$ and the same equilibrium map $m^{\mathrm{eq}}(\rho,\bm u)$.
\end{assumption}

\begin{proposition}[Matching post-collision populations under shared baseline]\label{prop:pop-match}
Assume Assumption~\ref{ass:shared}. When the channel of Theorem~\ref{thm:main} is used as the dissipative module with scales $S_r=S_r(\bm x,t)$ chosen so that $p_r^\pm\in[0,1]$ at each node and with the same $S_r$ used in encoding and decoding, the vector $f^\star_{\mathrm{open}}$ defined in \eqref{eq:open-post} equals the classical post-collision vector $f^\star$ in \eqref{eq:mrt-collision}.
\end{proposition}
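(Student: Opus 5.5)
The plan is to show that the relaxed moment vector $m^+$ in \eqref{eq:open-post} coincides entry by entry with the classical $m^+$ in \eqref{eq:mrt-collision}. The population statement then follows by applying the same invertible matrix $M^{-1}$ to both sides. The argument is purely algebraic and works one node $(\bm x,t)$ at a time, with $S_r(\bm x,t)$ held fixed there.

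\textbf{Step 1: equilibrium moments agree.} By Assumption~\ref{ass:shared}, both constructions use the same $M$ and the same equilibrium map. They also start from the same pre-collision $f$, so they extract the same hydrodynamic fields $(\rho,\bm u)$. Hence the equilibrium moments $m^{\mathrm{eq}}=Mf^{\mathrm{eq}}(\rho,\bm u)$ coincide, and so do the nonequilibrium moments $\delta m=Mf-m^{\mathrm{eq}}$. I will make explicit that the hydrodynamic fields are read off from the same $f$, since this is what ties the two pipelines together.

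\textbf{Step 2: relaxed nonequilibrium moments agree, mode by mode.}
\begin{itemize}
\item For a dissipative mode $r\in\Dd$: the scale $S_r$ is admissible ($p_r^\pm\in[0,1]$) and is used identically in encoding and decoding. Theorem~\ref{thm:main} therefore gives $\delta m_r'=\lambda_r\delta m_r=(1-s_r)\delta m_r$, using \eqref{eq:lambda-def}. This is exactly the classical update.
\item Across different dissipative modes: the registers are disjoint and the channels act as a tensor product. By Remark~\ref{rem:composite}, the per-mode decoded values hold simultaneously, with no cross-talk.
\item For a conserved mode $r\in\Cc$: $s_r=0$, so the classical update is $\delta m_r'=\delta m_r$. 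This matches the definition \eqref{eq:full-delta-prime} exactly.
\end{itemize}
Thus $\delta m'$ agrees in every component, and in vector form it equals $\Lambda\,\delta m$ as in \eqref{eq:vector-lambda}.

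\textbf{Step 3: conclude.} Adding the common $m^{\mathrm{eq}}$ gives identical vectors $m^+$. Applying $M^{-1}$, which exists because $M$ is invertible by Section~\ref{subsec:generic-mrt}, gives $f^\star_{\mathrm{open}}=f^\star$.

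There is no real analytic obstacle; the delicate point is bookkeeping. First, the decoded quantity is an expectation value, so the equality is an identity of decoded expectations, not of single-shot outcomes. Second, the decoder must use the same $S_r$ as the encoder, as required in Theorem~\ref{thm:main}. With both conditions stated, the equality is exact.
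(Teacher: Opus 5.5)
Your proposal is correct and follows essentially the same route as the paper: the conserved modes get the identity update via $s_r=0$, the dissipative modes are handled by Theorem~\ref{thm:main}, and then the common $m^{\mathrm{eq}}$ and $M^{-1}$ give the same $m^+$ and $f^\star$. Your Step~1 makes explicit that both pipelines read $(\rho,\bm u)$ from the same pre-collision $f$, so that $m^{\mathrm{eq}}$ and $\delta m$ coincide; the paper leaves this premise implicit in Assumption~\ref{ass:shared}.
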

\begin{proof}
For conserved modes, \eqref{eq:full-delta-prime} is the identity update. For dissipative modes, Theorem~\ref{thm:main} gives $\delta m_r'=\lambda_r\delta m_r=(1-s_r)\delta m_r$. Hence $\delta m'$ coincides with the classical MRT-updated nonequilibrium moment vector. Substituting into \eqref{eq:open-post} and \eqref{eq:mrt-collision} yields the same $m^+$ and therefore the same $f^\star=M^{-1}m^+$.
\end{proof}

\begin{remark}[One-step map and Chapman--Enskog]\label{rem:ce-appendix}
Under Assumption~\ref{ass:shared}, if decoded post-collision populations agree with classical MRT at each node (Proposition~\ref{prop:pop-match}) and streaming is the classical permutation \eqref{eq:streaming}, then the deterministic one-step update agrees with the corresponding classical MRT-LBM scheme, and standard Chapman--Enskog analysis for that classical scheme applies unchanged. This observation is included only to connect the dissipative primitive to the classical discrete dynamics.
\end{remark}

\end{document}